\def\BibTeX{{\rm B\kern-.05em{\sc i\kern-.025em b}\kern-.08em
    T\kern-.1667em\lower.7ex\hbox{E}\kern-.125emX}}
\newcommand{\sketch}{\todo[color=yellow!30,inline]}    
\definecolor{light-gray}{gray}{0.95}
\pgfplotsset{width=\columnwidth,compat=1.9}
\newtheorem{theorem}{Theorem}
\newcommand{\copyrightstatement}{
	\begin{textblock}{1}(0.00,0.0)    
		\noindent
		\scriptsize
		\copyright \  
		2021 IEEE.
		Personal use of this material is permitted.  Permission from IEEE must be obtained for all other uses, in any current or future media, including reprinting/republishing this material for advertising or promotional purposes, creating new collective works, for resale or redistribution to servers or lists, or reuse of any copyrighted component of this work in other works.
		This paper is the authors' accepted version to be published in the 40th IEEE Symposium on Reliable Distributed Systems (SRDS). For the final, published version we refer to DOI [\textit{to be inserted here later upon publication}].
	\end{textblock}
}
\begin{document}
	\copyrightstatement


\title{Making Reads in BFT State Machine Replication Fast, Linearizable, and Live
\thanks{This work has been funded by the Deutsche Forschungsgemeinschaft (DFG, German Research Foundation) grant number 446811880 (BFT2Chain), by FCT through the ThreatAdapt project (FCT-FNR/0002/2018), and the LASIGE Research Unit
(UIDB/00408/2020 and UIDP/00408/2020), and by the European Commission
through the H2020 VEDLIoT project (957197).}
}


 \author{\IEEEauthorblockN{Christian Berger}
  	\IEEEauthorblockA{
  		\textit{Universität Passau}\\
  		Passau, Germany 
  	}
  	\and
  	\IEEEauthorblockN{Hans P. Reiser}
  	\IEEEauthorblockA{
  		\textit{Universität Passau}\\
  		Passau, Germany 
  	}
  	\and
  \IEEEauthorblockN{Alysson Bessani}
  \IEEEauthorblockA{
  \textit{LASIGE, Faculdade de Ciências,} \\
 \textit{Universidade de Lisboa},  
   Portugal 
  }
 }

\maketitle

\begin{abstract}
\textit{Practical Byzantine Fault Tolerance} (PBFT) is a seminal state machine replication protocol that achieves a performance comparable to non-replicated systems in realistic environments.
A reason for such high performance is the set of optimizations introduced in the protocol.
One of these optimizations is \textit{read-only requests}, a particular type of client request which avoids running the three-step agreement protocol and allows replicas to respond directly, thus \textit{reducing the latency of reads from five to two communication steps}.
Given PBFT's broad influence, its design and optimizations influenced many BFT protocols and systems that followed, e.g., BFT-SMaRt.
We show, for the first time, that the read-only request optimization introduced in PBFT more than 20 years ago \textit{can violate its liveness}.
Notably, the problem affects not only the optimized read-only operations but also standard, totally-ordered operations.
We show this weakness by presenting an attack in which a malicious leader blocks correct clients and present two solutions for patching the protocol, making read-only operations fast and correct.
The two solutions were implemented on BFT-SMaRt and evaluated in different scenarios, showing their effectiveness in preventing the identified attack.
\end{abstract}

\begin{IEEEkeywords}
Byzantine Fault Tolerance, State Machine Replication, Reads, Attack
\end{IEEEkeywords}

\section{Introduction}\label{intro}

A standard mechanism for implementing dependable services is \emph{state machine replication} (SMR): 
it achieves fault tolerance by coordinating client interactions with a set of server replicas, ensuring all replicas process the same requests in the same order~\cite{schneider1990implementing}.
This way, the service remains functional even if a fraction of the replicas fail.
A fault model describes the type of faults that are considered.
For example, \emph{Byzantine fault tolerance} (BFT) allows replicas to display arbitrary behavior, even involving collusion with other faulty replicas~\cite{lamport1982byzantine}.
A Byzantine replica may selectively send messages or send conflicting messages to other replicas.
In BFT systems, an adversary should not be able to break the system as long as it controls no more than $f$ out of $n$ replicas. 

A major breakthrough has been made by the introduction of PBFT~\cite{castro1999practical}:
it describes a practical BFT SMR protocol that performs well in realistic environments by incorporating a set of optimizations to increase the performance of the BFT system.
It claims that all optimizations introduced preserve the liveness and safety properties of the protocol~\cite{castro1999practical}.

Design choices of PBFT have been widely adopted by other BFT protocols that made further enhancements, e.g., HQ~\cite{cowling2006hq}, Zyzzyva~\cite{kotla2007zyzzyva}, PBFT-CS~\cite{wester2009tolerating}, Aardvark~\cite{clement2009upright}, and UpRight~\cite{clement2009upright}.
Also, the more recent BFT-SMaRt~\cite{bessani2014state} uses a similar normal-case operation message pattern and inherits several optimizations, among them the tentative use of unordered  \textit{read-only requests}.
This optimization allows a client to read the state of the service in a single round-trip, without passing through the ordering protocol, as long as it waits for larger quorums of matching replies.
Incorporating optimizations like this can lead to substantial performance gains~\cite{sousa2015separating} but can also make the system deviate from the original protocol, which was proved correct, thus opening the door for vulnerabilities that subsequently manifest in a highly optimized BFT system's implementation and can be exploited by attackers.


\paragraph*{Contributions}
In this paper, we identify a problem in a widely-accepted 20-year-old optimization of BFT SMR design that enables fast read operations~\cite{castro1999practical}.
The attack we present works against the normal-case operation pattern of a highly optimized BFT system.
In particular, this attack is possible if PBFT's \textit{read-only operations} optimization is used and the adversary controls $f$ replicas, including the leader.
The attack is not limited to (unordered) read-only operations but affects standard totally-ordered operations (e.g., state updates) alike.
Such a leader can block selected clients forever, thus negatively affecting the liveness of the system.

The identified weakness stems from the requirement that clients wait for strictly more than $n-2f$ matching replies to preserve SMR consistency.
Therefore, we believe the same vulnerability might appear in other PBFT-like protocols requiring clients to wait for such quorums~\cite{Bessani13,Veronese13}.
However, in this paper, we focus our attention on the problem of fast, consistent, and live reads.

We present two generic solutions to patch a vulnerable PBFT-like SMR protocol with the read-only optimization. 
The solutions enable the system to guarantee both liveness (assuming partial synchrony) and safety (i.e., linearizability~\cite{herlihy1990linearizability}) even when single round-trip reads are enabled.
The core idea of our solutions is to modify the normal case operation to ensure requests are eventually executed in all correct replicas, or at least all these replicas are able to reply to pending requests.
Note that the SMR liveness is typically defined by clients eventually being able to accept a valid response to a request~\cite{castro1999practical}, which does not necessarily require all correct replicas to commit and execute a request.

A further novel observation we make is that, if the read-only optimization is used, the state transfer needs to make sure recovered replicas can reply to pending requests, thus making transferring replies during a state transfer necessary.

\paragraph*{Outline}
In the remainder of this paper, we first present relevant background in Section~\ref{background} and then elaborate on how reads are optimized in SMR-based systems
in Section~\ref{optimizing-reads}.
Subsequently, in Section~\ref{problem}, we explain the problem by presenting an attack on PBFT and BFT-SMaRt that can break the liveness of the system if the read-only optimization is used. 
After that, in Section~\ref{solution}, we explain our solutions to make reads in BFT SMR fast, linearizable, and live.
In Section~\ref{evaluation} we evaluate our implemented solutions in a LAN and a WAN environment. Finally, we discuss related work in Section~\ref{related-work} and conclude in Section~\ref{conclusion}.

\section{Background and Preliminaries}
\label{background}

We first explain our system model (Section~\ref{system-model}), some general background on BFT SMR (Section~\ref{bft-smr}), and subsequently review the PBFT algorithm (Section~\ref{pbft}).

\subsection{System Model} \label{system-model}
We consider a distributed system with a set of clients $C$ and a total of $n = 3f+1$ independent server replicas $R = \{r_0, r_1, ..., r_{n-1}\}$ where $f$ is an upper bound on the number of faulty replicas present in this system.
We assume an adversary that has full control over the $f$ Byzantine replicas that can be freely selected from $R$.
Although the behavior of faulty replicas is arbitrary, it is still bound by their computational capabilities, e.g., Byzantine replicas cannot break cryptographic primitives.

Every pair of processes communicates through authenticated fair point-to-point links, i.e., messages can be lost and delayed, but not forever.
Further, we assume a partially synchronous~\cite{dwork1988consensus} model in which processing and network can be initially asynchronous but after an unknown global stabilization time \textit{GST}, the communication and computation delay becomes bounded by some unknown values. 

\subsection{State Machine Replication}
\label{bft-smr}

State machine replication (SMR) is a classical paradigm for implementing fault-tolerant services in which a set of replicas emulates a centralized service processing operations.
The implementation of the SMR paradigm requires three properties~\cite{schneider1990implementing}: (1- Initial State) all correct replicas start on the same state; \mbox{(2- Determinism)} all correct replicas receiving the same operation on the same state produce the same result and resulting state; and (3- Coordination) all correct replicas process the same operations in the same order.
The third requirement, in particular, typically requires a consensus protocol for ensuring all correct replicas process the same operations in the same order.

Assuming these three requirements are satisfied by a service, the SMR approach satisfies the following properties:

\begin{itemize}
\item \emph{Safety}: all correct replicas execute the same sequence of operations;
\item \emph{Liveness}: all operations issued by correct clients are eventually completed.
\end{itemize}

The liveness property deserves further explanation.
An operation is considered completed when the client receives enough matching replies.
So, what needs to be ensured is that every request from a correct client is ordered and delivered to \emph{enough} correct replicas that execute it and reply to the client.
With crash failures, a single reply is enough for the client to know its operation was executed.
With Byzantine failures, as in this paper, a client operation is considered completed when $f+1$ matching replies from different replicas are received.

The safety property of SMR aims to ensure  \textit{linearizability}~\cite{herlihy1990linearizability} of the replicated service.
Roughly speaking, it states that the replicated state machine should \textit{behave} exactly like a centralized implementation (thus providing the illusion as if interacting with a single correct state machine, which atomically executes operations one after another).
In particular, the effect of an update operation that completes should be reflected in all subsequent (update or read) operations.

\subsection{The PBFT Algorithm} \label{pbft}

The Practical Byzantine Fault Tolerance (PBFT)~\cite{castro1999practical} algorithm was arguably the first \textit{practical} approach for tolerating Byzantine faults.
It guarantees safety under asynchrony and liveness in a partially synchronous system model~\cite{dwork1988consensus}, as long as less than a third of replicas are faulty.
PBFT's practicality stems from such optimal resilience and its achieved performance, which is comparable to non-replicated systems~\cite{castro1999practical}.

\begin{figure}[t]
    \centering
    \includegraphics[width=1\linewidth]{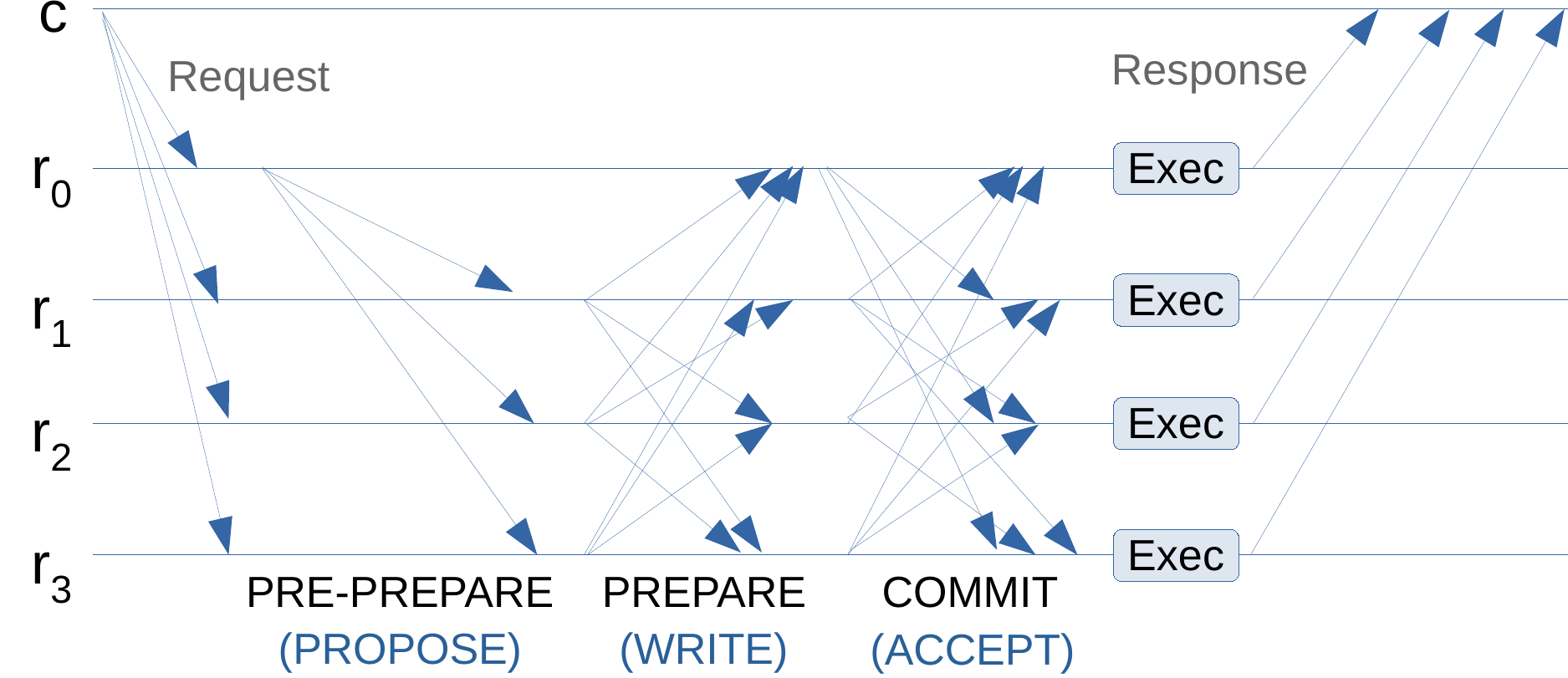}
    \caption{Normal case operation of PBFT (and BFT-SMaRt).}
    \label{fig:bft}
\end{figure}

\begin{figure*}[!t] 
 \centering
  \includegraphics[width=0.95\textwidth]{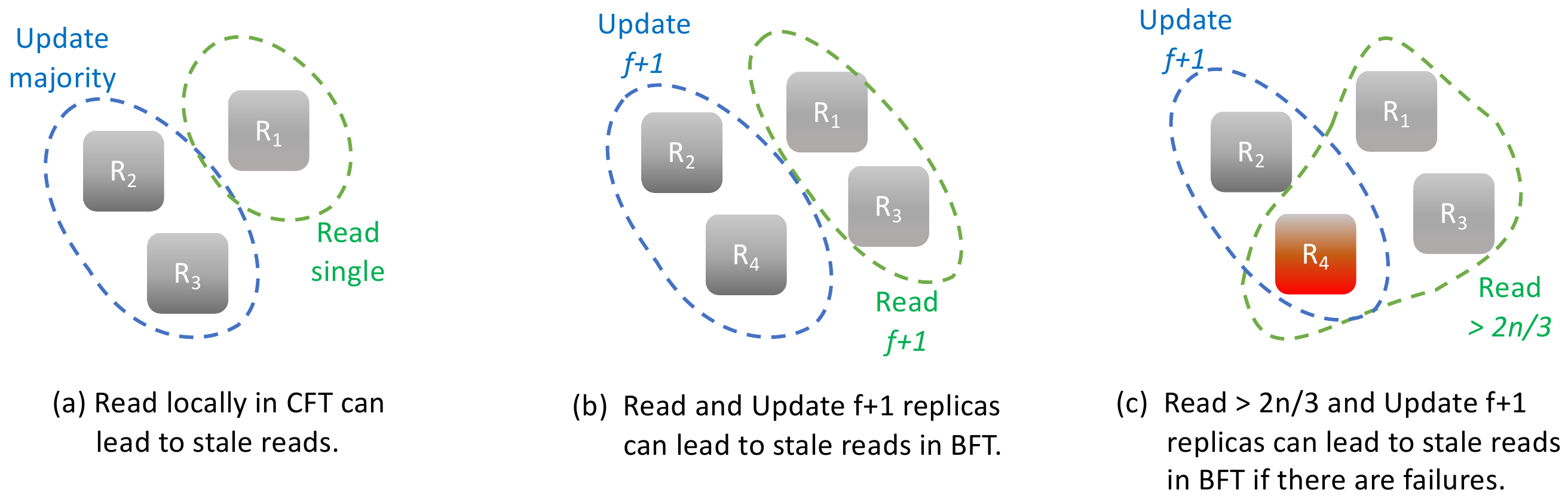}
  \caption{Quorum intersection of read-only optimizations for SMR.}
  \label{fig:quorums} 
\end{figure*}

PBFT's message flow is illustrated in Figure~\ref{fig:bft}.
Operations start when a client sends its request to the replicas (the leader, or after a timeout to all). 
Subsequently, the leader creates a batch of client requests and broadcasts it in a \texttt{PRE-PREPARE} message to all other replicas, which need to decide on it.
Replicas use a two-phase commit pattern of \texttt{PREPARE} and \texttt{COMMIT} which are all-to-all broadcasts: in both of these steps, each replica $r_i$ waits for a quorum certificate containing a total of $2f+1$ replicas (including itself) necessary to make a decision. All quorums are sufficiently large to overlap in at least one correct replica thus ensuring agreement among all, e.g., for any replica $r_i$ and $r_j$ it holds for their quorums $Q_i$ and $Q_j$ that $\vert Q_i \cap Q_ j  \vert \geq f+1 $. This ensures the safety of the protocol. Since commitment is ensured in at least $2f+1$ replicas, at least $f+1$ \textit{correct} replicas execute and respond to the client. 
The client accepts a result as soon as it gathers a weak certificate of $f+1$ \textit{matching} responses (albeit actual results may be replaced by a digest here).

If the leader is correct and no more than $f$ replicas are faulty, then the above described \textit{normal case operation} succeeds. 
The problem of a faulty leader can be resolved by the \textit{view change} protocol if enough ($f+1$) replicas support a view change.




\section{Optimizing Reads in SMR}
\label{optimizing-reads}

Many practical systems tend to execute much more reads than updates (e.g., \cite{hunt2010zookeeper,corbett2013spanner}).
Intuitively, the execution of consensus seems to be too costly when a client just wants to read (a portion of) the state of the replicated service.
This is particularly true when considering wide-area deployments, where the time required for running a five-step protocol like PBFT can be significant.

For this reason, crash fault-tolerant (CFT) SMR-based systems tend to implement methods for reading from a single replica.
The main issue with reading from a single replica in a naive way is that it negatively affects the consistency guarantee of the system.
The reason for this is that in a non-synchronous system model, an operation (say, an update) is considered committed and ready for execution when it is confirmed by a quorum of replicas, i.e., a subset of replicas of size $q$, being $q \leq n-f$ for availability~\cite{malkhi1998byzantine}.
With such a requirement, a local read happening after an update is executed, as shown in Figure~\ref{fig:quorums}(a), may not reflect the committed update.

This is what happens, for instance, in Zookeeper~\cite{hunt2010zookeeper}, which allows a client to read from any replica.
However, this optimization, when used with FIFO communication channels, makes the system satisfy only sequential consistency~\cite{attiya1994seqconsvslin}, and not linearizability, i.e., clients can read stale values, but these are part of a consistent sequential history of operations~\cite{Jun11}.
Satisfying only sequential consistency enables Zookeeper to scale its capacity to serve reads with its number of replicas.

To preserve linearizability and still read from a single replica, some CFT systems employ temporary leases.
The most common type of lease is to read only from the stable Paxos leader~\cite{corbett2013spanner}, but there are also more advanced ways to implement such leases for giving the right to serve reads on different parts of the state to different replicas~\cite{moraru2014paxos}.

When tolerating Byzantine failures, it is clearly impossible to read from a single replica, since the result might not reflect the correct state of the system.
Therefore, the minimum number of replicas to be contacted will be $f+1$.
However, reading from this amount of replicas still leads to linearizability violations, as illustrated in Figure~\ref{fig:quorums}(b).

To the best of our knowledge, the only way to perform a read without executing consensus or giving up linearizability is by employing the read-only optimization introduced in PBFT~\cite{castro1999practical}.
Read-only requests are handled by a direct request-reply pattern, in which all replicas optimistically respond to a read without ordering the operation.
This requires not only the read operation, but all (update) operations to wait for a quorum of $q > 2n/3$ matching responses at the client to ensure linearizability.
The reason for this requirement is illustrated in Figure~\ref{fig:quorums}(c): the intersection between a quorum $Q_\mathit{read}$ with $q > 2n/3$ replicas and a quorum $Q_\mathit{update}$ with only $f+1$ replicas can contain only malicious replicas that will answer an outdated value.
This requirement is also explicitly stated by Castro and Liskov:
``\textit{The read-only
optimization preserves linearizability provided clients obtain $2f+1$ matching replies for both read-only and read-write operations}.''~\cite{castro2001byzantine}.
If 
the client cannot obtain $q$ matching replies, it re-transmits the read-only operation as a normal, ordered operation.

\section{The Problem in PBFT and BFT-SMaRt} \label{problem}

The reason for the optimization discussed in the previous section to work is because it implicitly assumes that eventually all \emph{correct replicas} will execute all updates submitted to the system.
This comes from the Agreement property of total order (or atomic) multicast~\cite{Had94}: \emph{if a correct process delivers $m$, then all correct processes eventually deliver $m$.}\footnote{Or, alternatively, due to the consensus' Termination property~\cite{Had94}: \emph{all correct processes eventually decide.}}
In this section, we show this is not the case due to subtle issues affecting both PBFT and BFT-SMaRt.

\begin{figure}[htb]
    \centering
    \includegraphics[width=1\linewidth]{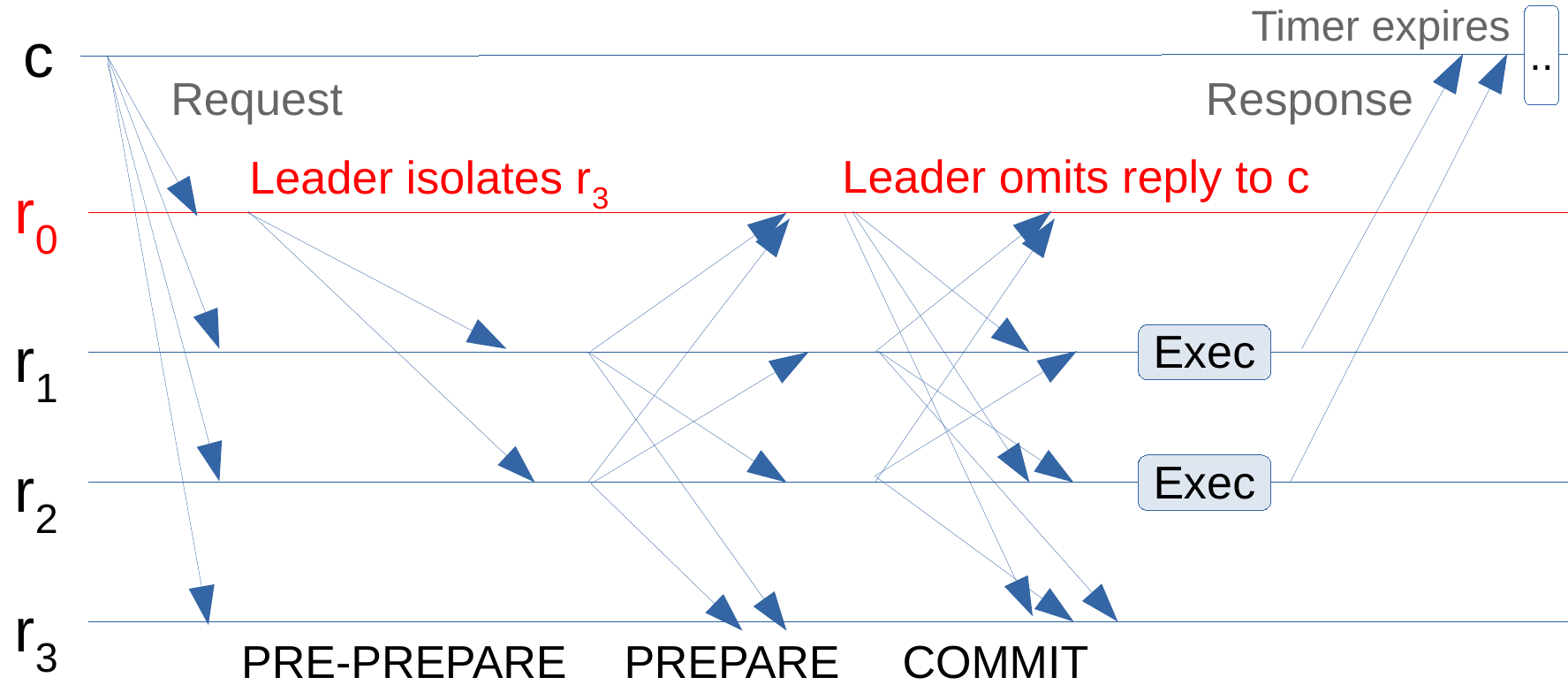}
    \caption{In the isolating leader attack, a Byzantine leader separates up to $f$ correct replicas from participation in the normal case operation pattern, thus preventing clients to accept a reply if a response quorum of $2f+1$ is used.}
    \label{fig:attack}
\end{figure}

\subsection{Breaking PBFT Liveness} \label{pbft-attack}

The attack (see Figure~\ref{fig:attack}) starts with a client broadcasting a normal update request.
The faulty leader, who wants to block this client from receiving enough replies (a form of censorship), purposely isolates up to $f$ correct replicas by selectively not sending a \texttt{PRE-PREPARE} message to them. 
For the other replicas, the leader acts correctly, following the normal case message flow. Since the \texttt{PREPARE} messages do not contain the proposed value $v$, but only a hash of it, the isolated correct replicas have no way of knowing the proposal as it was not reliably broadcasted to them before.
In particular, isolated replicas will not be able to prepare $v$.
This leads to the group of isolated replicas not being able to participate in the protocol anymore.
After the \texttt{COMMIT} phase, the leader executes the request batch together with $2f$ other replicas, being up to $f$ among them faulty (including the leader).
The attack proceeds with the leader not responding back to the client, making it gather between $f+1$ and $2f$ results (other Byzantine replicas within the group of $2f+1$ committing replicas might not respond as well). 

Due to the use of the read-only optimization, the client cannot accept a reply with less than $2f+1$ matching responses.
As a consequence, it will re-transmit the request after a timeout.
In this case, all correct replicas that executed the request (between $f+1$ and $2f$) re-transmit their result, while the Byzantine leader and other faulty replicas continue to selectively not respond to the client. Note that correct replicas that did not execute the request will not respond to the client.

When it comes to a faulty leader, many problems in PBFT are solved by performing a \textit{view change}, which replaces the leader and synchronizes replicas.
However, the group of isolated and correct $f$ replicas cannot trigger the view change protocol as long as the Byzantine leader displays his misbehavior only to these replicas. 
This happens because of the requirement of having $f+1$ \texttt{VIEW-CHANGE} messages for demoting a leader.
This requirement is intentional, as a group of $f$ Byzantine replicas should not be able to disturb the stability of the system by maliciously triggering view changes.

Even state transfer cannot help the affected client for a number of reasons.
First, a single client issuing an operation has no guarantee that there will be subsequent operations (and a sufficient number of such operations) by other clients so that replicas will perform a checkpoint and thus a state-transfer is  eventually being triggered. Second, the $f$ correct replicas might recover an updated state with the effect of the client operation but they might not be able to answer the client since the operation was not executed by them.\footnote{Note that if the PBFT implementation does not write client replies in the checkpoint \cite{distler2021byzantine}, or discards replies to save memory~\cite{castro2002practical}, then there is no guarantee the complete replica state necessary to respond to pending requests is actually being transferred, so these requests might not complete.} Third, state transfer happens rather infrequently (e.g., after a few minutes if not hours), and making a client wait this long just to complete a single request is impractical.


In the end, the $f+1$ non-isolated, correct replicas that participate in the protocol run cannot distinguish this attack from a successful normal case operation.
Re-transmissions do not solve the issue either as long as the Byzantine leader perpetuates the attack.


The described attack does not work in the basic, non-optimized variant of PBFT in which a client only waits for $f+1$ matching replies to accept a result.
The problem appears only when we increase the reply quorum size for incorporating fast read-only operations preserving linearizability.

In PBFT's liveness proof sketch (there's no formal proof of liveness for this protocol)~\cite{castro2001byzantine}, the assumption that a client only requires $f+1$ replies is employed to show operations eventually complete.
It is noteworthy that in PBFT it is only guaranteed that $2f+1$ replicas need to participate for committing a request, but it is not guaranteed that these are all \textit{correct} replicas responding to a client.
It means that PBFT does not ensure that all correct replicas execute all operations in the same order, only that eventually their state will reflect all these operations.

The main reason for this is the use of state transfer to deal with lossy channels and finite memory, which basically imposes the need for such a mechanism for recovering servers.


\subsection{The Attack Against BFT-SMaRt} \label{bft-smart-problem}

BFT-SMaRt implements Mod-SMaRt, a modular SMR protocol that can be instantiated with any consensus protocol as long as it satisfies some constraints~\cite{sousa2012byzantine}. 
The current implementation uses a version of a Byzantine Paxos~\cite{cachin2009yet} that makes the combined protocol similar to PBFT, at least in the normal case (see Figure~\ref{fig:bft}).

BFT-SMaRt also implements PBFT's read-only optimization, and it makes the system vulnerable to the attack just described for PBFT.
The Mod-SMaRt liveness proof is based on the following claim (last paragraph of Theorem 2's proof~\cite{sousa2012byzantine}):
``\emph{... the consensus instance will eventually decide ... in at least $n-f$ replicas. Because out of this set of replicas there must be $f+1$ correct ones, the operation will be correctly ordered and executed in such replicas.}''
Notice that, in this case, there is no consensus protocol that can solve the issue, since Mod-SMaRt itself only ensures $n-f$ replicas will start the consensus protocol.

\subsection{Summary} 

In summary, the use of PBFT's read-only optimization in BFT systems can create a vulnerability in which a malicious leader can violate the liveness of the protocol.
Given the normal case operation, the main challenge is to guarantee that a quorum of $q > 2n/3$ matching replies will arrive at the client for any executed operation, which in turn demands the execution of ordered requests in at least $q$ \textit{correct} replicas and that replicas are able to respond pending operations even after a state transfer. 
An abstract view of this problem is illustrated in Figure~\ref{fig:problem_abstract}.
    
\begin{figure}[tb]
    \centering
    \includegraphics[width=1\linewidth]{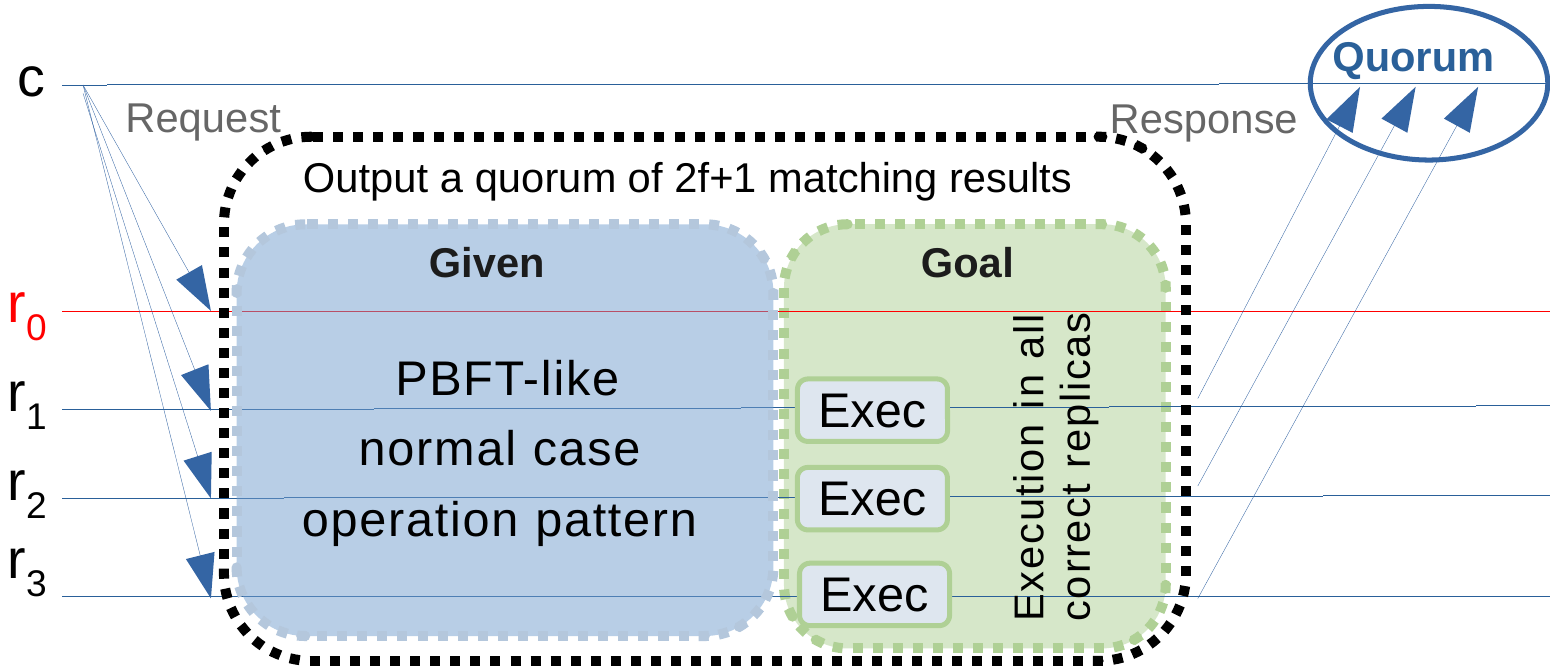}
    \caption{Abstract view of the problem.}
    \label{fig:problem_abstract}
\end{figure}

\sketch{ab: I would remove Figure~\ref{fig:problem_abstract} and the last phrase.}

\section{Patching the Protocols}
\label{solution}

To solve the problem, we propose a set of protocol modifications that prevent the isolation of correct replicas and ensure that if one correct replica executes a request, all correct replicas can execute, or at least send a reply for, such request.
This way, the client is guaranteed to receive a quorum of matching replies, which is necessary to accept a result when the read-only optimization is used.

In the following, we first provide a simple and straightforward solution to satisfy \textit{atomic broadcast' Agreement} property: the idea is to let every replica broadcast its decision to all others.
We subsequently present a second solution that improves the first approach by enabling replicas to query and forward decisions \textit{on demand}. 
This way we can avoid additional overhead during expected common fault-free executions.
Finally, we discuss how the checkpoints exchanged during state transfer need to be enriched to ensure our modification can be implemented with bounded memory.

For simplicity of exposition, the presented protocols assume \textit{reliable channels}, which can be implemented in our system model.
If these are not given, the protocols need to be modified to keep re-transmitting messages.

Moreover, for interested readers we additionally provide correctness proofs for the modifications we made in the appendix of this paper.

\subsection{Broadcasting Decisions} 
\label{broadcast-decisions}

To ensure a decision is made in all correct replicas, we can extend the 3-phase normal case operation message pattern with an additional all-to-all broadcast phase (see  Figure~\ref{fig:solution:decision-broadcasting}, and Algorithm~\ref{algo:Dec-Broadcasting}).

When deciding a batch of requests $v$ in consensus instance (or sequence number) $c$, each replica $r \in R$ broadcasts the decision along with a \textit{cryptographic proof} $\Gamma$ to all other replicas by sending a $\langle \texttt{FWD-DECISION}, c, v, \Gamma \rangle$ message.
This procedure has the following requirements:

    
   (1)~$\Gamma$ needs to be \textit{externally} verifiable, e.g., a set containing a quorum ($2f+1$ in a typical system with $n=3f+1$ replicas)
   of signed \texttt{COMMIT} messages in PBFT or 
   signed \texttt{ACCEPT} messages in BFT-SMaRt that validate the decision. Further, 
   the function \texttt{verify($c$,$v$,$\Gamma$)} checks if the decision is valid.
    
    (2)~If some replica $r'$ has not decided consensus $c$ and receives a $\langle \texttt{FWD-DECISION}, c, v, \Gamma \rangle$ message, it checks that the proof is valid and decides $v$, executing its requests when the requests decided in all previous instances are executed.


\begin{algorithm}[tb] 
\small

 \caption{Decision-Broadcasting at replica $r$}
\DontPrintSemicolon
 \label{algo:Dec-Broadcasting}
 
\SetKwProg{Upon}{Upon}{ do}{}
\SetKwProg{Function}{Function}{}{}
\SetKwComment{Comment}{$\triangleright$\ }{}
\Function{\textup{decide($c, v$)}}{
\If{\textup{$\neg$ isDecided($c$)}}
{

\texttt{decided$_c$} $\leftarrow$ \texttt{true}\;
retrieve $\langle c, v, \Gamma \rangle$ from \texttt{Log}\;
broadcast $\langle \texttt{FWD-DECISION}, c, v, \Gamma \rangle$ to all\;
deliver($c, v$) \Comment*[r]{Execution}
}
}

\Upon{\textup{reception of a $\langle \texttt{FWD-DECISION}, c, v, \Gamma \rangle$}}
{
\If{\textup{$\neg$ isDecided($c$) 	$\land$ verify($c,v,\Gamma$)}}
{
\texttt{Log} $\leftarrow$ \texttt{Log} $\cup \{ \langle c, v, \Gamma \rangle \} $ \;
decide($c,v$)\;
}
}
\end{algorithm}

\begin{algorithm}[tb] 
\small

 \caption{Decision-Forwarding at replica $r$}
\DontPrintSemicolon
 \label{algo:Forwarding}
 
\SetKwProg{Upon}{Upon}{ do}{}
\SetKwProg{Create}{Create}{:}{}
\SetKwProg{Function}{Function}{}{}
\SetKwComment{Comment}{$\triangleright$\ }{}
\Function{\textup{decide($c, v$)}}{
\If{\textup{$\neg$ isDecided($c$)}}
{

\texttt{decided$_c$} $\leftarrow$ \texttt{true}\;
deliver($c, v$) \Comment*[r]{Execution}
}
}

\Upon{\textup{reception of $f+1$ $\langle \texttt{ACCEPT}, c, h \rangle $ for hash $h$}}{

\If{\textup{Proposal($c$) = $\bot$ $\lor$ hash(Proposal($c$)) $\neq h$}}
{
send $\langle \texttt{REQ-DECISION}, c \rangle$ to $2f$ other replicas\;
}

}

 \Upon{\textup{reception of a $\langle \texttt{REQ-DECISION}, c \rangle$ from $r'$}}{\Create{\textup{new EventHandler}}{\Upon{\textup{isDecided($c$)}}{
 retrieve $\langle c, v, \Gamma \rangle$ from \texttt{Log}\;
 send $\langle \texttt{FWD-DECISION}, c, v, \Gamma \rangle$ to $r'$}}
}

\Upon{\textup{reception of a $\langle \texttt{FWD-DECISION}, c, v, \Gamma \rangle$}}
{
\If{\textup{$\neg$ isDecided($c$) 	$\land$ verify($c,v,\Gamma$)}}
{
\texttt{Log} $\leftarrow$ \texttt{Log} $\cup \{ \langle c, v, \Gamma \rangle \} $ \;
broadcast $\langle \texttt{FWD-DECISION}, c, v, \Gamma \rangle$ to all\;
decide($c$,$v$)\;
}
}
\end{algorithm}

These requirements enable a correct replica to decide by either gathering enough \texttt{COMMIT}/\texttt{ACCEPT} messages for a prepared leader proposal, or by processing a forwarded decision that comes with a valid proof.
The \textit{decision broadcast} phase ensures that, if a single correct replica reaches a decision, it propagates to all other correct replicas.
In fault-free execution, replicas may still commit after the usual three phases.

\sketch{ab: this solution is only ensured to work if we have reliable channels.}

\begin{figure*}[tb]
    \centering
        \begin{subfigure}[b]{.495\textwidth}
     \includegraphics[width=1\textwidth]{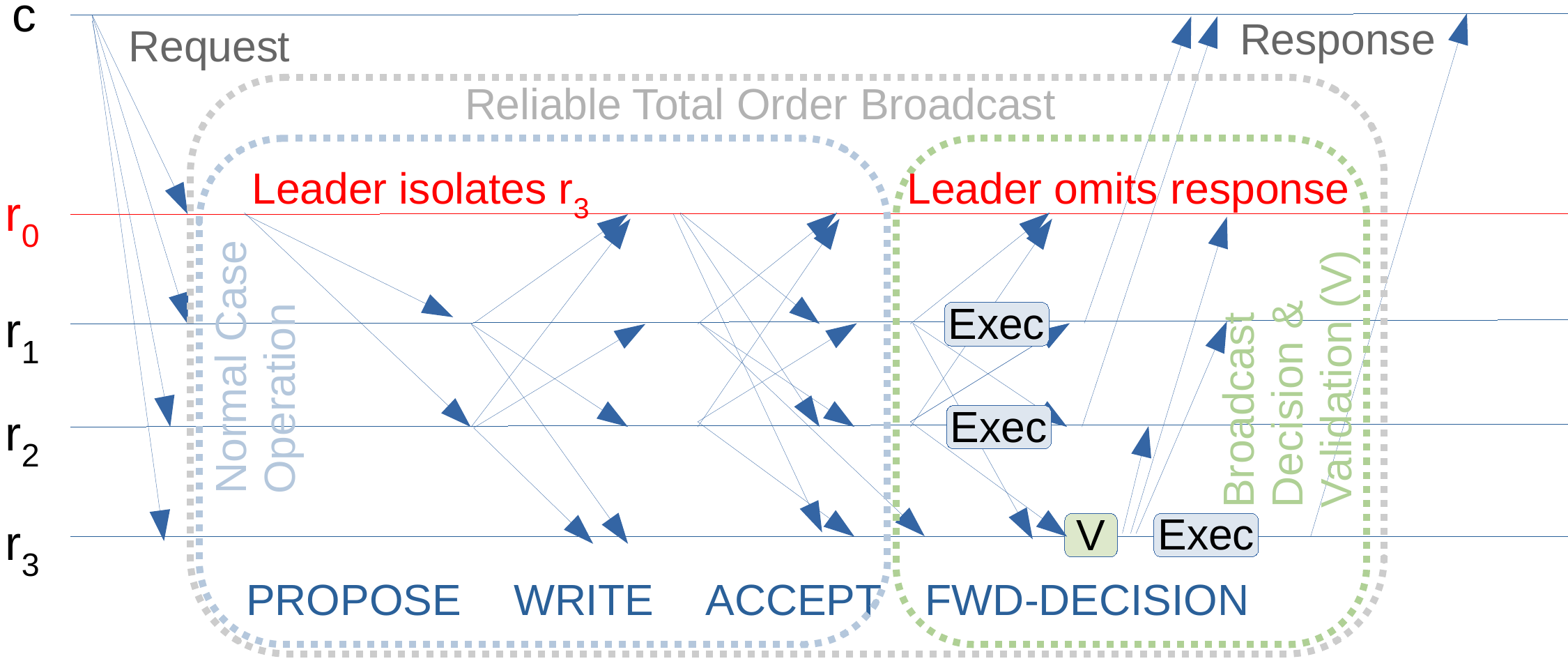}
    \caption{Decision Broadcasting: the normal case operation pattern is extended by an additional all-to-all broadcast of \texttt{FWD-DECISION}.}
       \label{fig:solution:decision-broadcasting}
    \end{subfigure}
    \begin{subfigure}[b]{.495\textwidth}
            \includegraphics[width=1\textwidth]{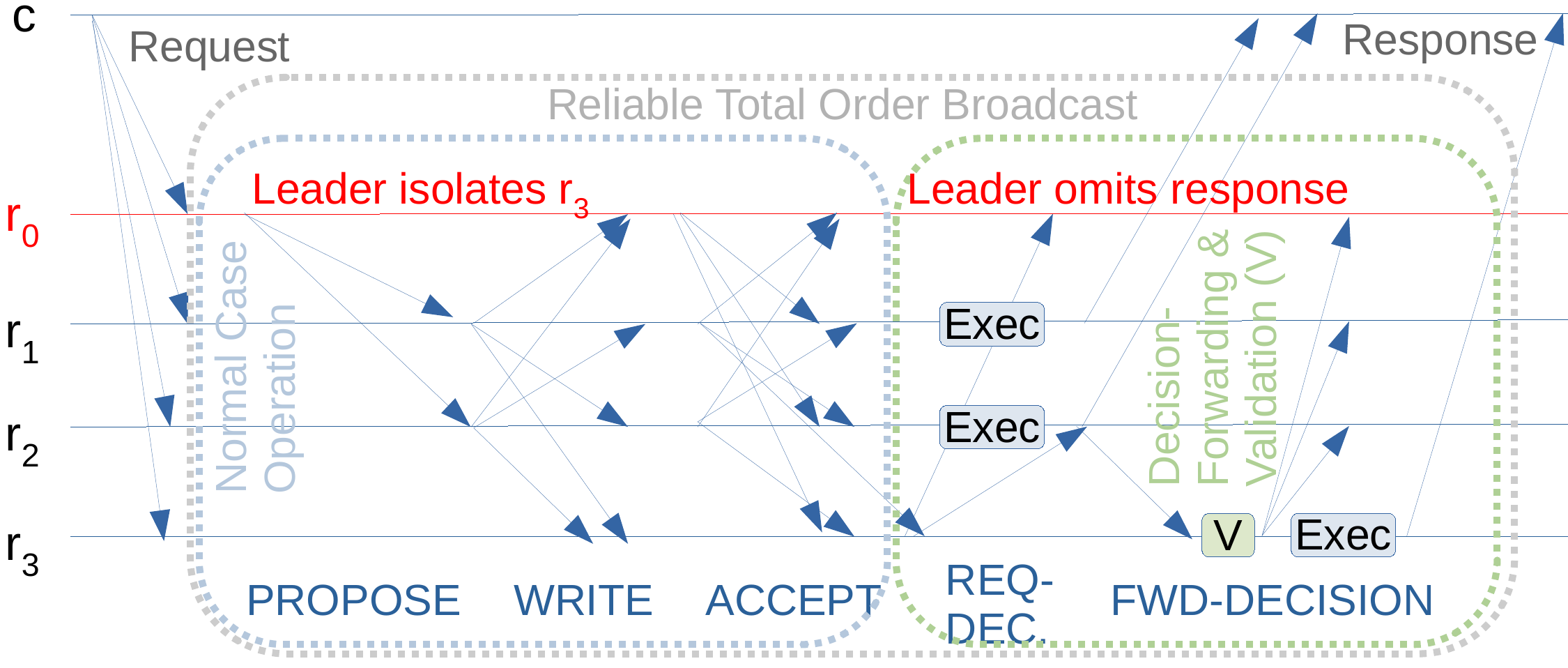}
    \caption{Decision Forwarding: the normal case operation pattern is extended \textit{on demand} by \texttt{REQ-DECISION} and \texttt{FWD-DECISION}. 
 }
       \label{fig:solution:decision-forwarding}
    \end{subfigure}
    \caption{
    Correct replicas can propagate a decision and externally-verifyable proof $\Gamma$ to possibly isolated replicas.
    }
    \label{fig:solution}
\end{figure*}

\subsection{Forwarding Decisions on Demand}
\label{decision-forwarding}

While the \textit{decision broadcasting} solution works, it puts an additional burden on the system during \textit{fault-free} executions.
Replicas disseminate the decision as soon as possible to make it available to the others -- even when this is unnecessary.
To avoid the induced overhead of broadcasting decisions when the leader is correct, we aim to make this scheme \textit{on demand}.
Correct replicas can be queried by other replicas that are isolated, to forward a decision to them.
Therefore, we extend the 3-phase consensus protocol with two additional messages: \texttt{REQ-DECISION} and \texttt{FWD-DECISION}.
The first one is used by (potentially stuck) replicas to query a decision along with a cryptographic proof $\Gamma$ from other replicas, while the second one is the corresponding reply message that contains the forwarded decision and proof.
In the following, we briefly describe this protocol extension for BFT-SMaRt (see Algorithm~\ref{algo:Forwarding} - the overall pattern is illustrated in Figure~\ref{fig:solution:decision-forwarding}):

A correct replica $r$ requests a decision from others for consensus $c$ as soon as it receives $f+1$ $\langle \texttt{ACCEPT}, c, h \rangle$ for some hash $h$ \textit{and} it has not yet received a leader proposal or the received proposal does not match $h$ (Lines 5--7). Note that a malicious leader can also isolate $r$ by sending a corrupt or deviating proposal. This requirement ensures that such a request is sent if (1)~a decision is made (this causes $r$ to receive  at least $f+1$ \texttt{ACCEPT}s from correct replicas) and (2)~$r$ was isolated by the leader. Note that it is also possible that a request is sent under a correct leader if the proposal arrives late at $r$. This is not a problem because asking for a decision does not block $r$'s participation in the normal protocol.

If some correct replica $r'$ receives such a request, it will respond back to $r$ as soon as a decision becomes available (Lines 8-12).
Further, it is necessary that $r$ asks sufficiently many other replicas to reach at least one of the $f+1$ correct replicas that are guaranteed to eventually decide.
Moreover, in our solution replicas also remember to whom they forwarded decisions per consensus (not shown in the simplified pseudo code), so that decisions can only be queried once per replica and consensus.

Finally, if a forwarded decision is received by $r$, then $r$ first checks if it is still undecided in consensus $c$ (to preserve \textit{Integrity} of consensus) and subsequently uses the proof $\Gamma$ to verify the correctness (\textit{Agreement}) of the received consensus decision (Lines 13--14).
The decision is appended to the log (Line 15), and echoed to all other replicas (Line 16).
This ``echo'' ensures the \textit{Termination} property of consensus holds even if some correct replica uses a forwarded decision to decide instead of further participating in the three-step protocol.
This happens because if a single correct replica $r$ decides using a forwarded decision message $d$, all correct replicas will also receive $d$ and eventually decide.


\subsection{Implications on State Transfer}

The protocol described in the previous section assumes that decisions and their proofs are stored forever in case some replica needs them to reply to a \texttt{REQ-DECISION} message. 
Both PBFT and BFT-SMaRt include logging features that can be extended to also log this information.
%
%
To be practical, however, such logs need to be prevented from growing in an unbounded way.
PBFT and BFT-SMaRt achieve this by using a garbage collection mechanism to create state checkpoints and delete all log entries belonging to preceding decisions.

In case a replica receives a forwarding request for a decision it has already garbage-collected, it will respond with an error message (\texttt{OUTDATED-REQ}). In such a case, the requesting replica has fallen back too far and thus needs to initiate a state transfer to catch up with the others. The error message should include a recent proof for the currently highest decided consensus to convince the requester of this fact.

An important detail, necessary for the correctness of this approach, is the following:
When using the read-only optimization and our solution we need to enrich the replica state with the last reply sent to each client.\footnote{Assuming clients operate in closed loop, with at most one pending request at a time. If a client can have up to $k$ pending requests, the state must store the last $k$ replies sent to the client.} 
These replies need to be transferred during state transfer to ensure the recovered replica can reply to pending requests issued by any correct client. 
If a reply needs to be re-transmitted because of a client timeout, then, assuming checkpoints would not contain a reply store, the replicas that fetched the state from the checkpoint cannot provide responses for the requests they skipped ~\cite{distler2021byzantine}. 
If the last replies to clients are transferred during state transfer, there will be eventually $2f+1$ correct replicas that can respond to clients.

\section{Implementation and Evaluation}
\label{evaluation}

In this section, we first give a brief overview of our LAN and WAN experimental setups and the implemented solutions.
Subsequently, we quantify to which extent the read-only optimization is worth having in BFT systems by comparing the performance of optimized and non-optimized read-only operations.
After that, we compare the performance of our implemented solutions against the normal BFT-SMaRt as baseline to reason about the induced overhead.
Lastly, we compare \textit{decision-broadcasting} (Section~\ref{broadcast-decisions}) and \textit{decision-forwarding} (Section~\ref{decision-forwarding}) to evaluate their performance during fault-free operation and when an isolation attack occurs. 

\subsection{Implemented Solutions and Experimental Setup} 
\label{implementation}

We implemented the attack described in Section~\ref{pbft-attack} and both \textit{decision broadcasting} and \textit{decision forwarding} as solutions in BFT-SMaRt.
We define the variants \textit{decision broadcasting*} and \textit{decision forwarding*} as the respective counterparts of our implemented solutions in which the leader perpetually conducts the isolation attack.
We use them to investigate the impact of the attack on the system performance.

\subsubsection*{LAN Setup}
To benchmark our prototype in terms of throughput and latency, we use a setup in which experiments are conducted with 4 replicas hosted on separate Ubuntu 20.04 VMs with 4 vCPUs and 10 GB of RAM, each on an Intel Xeon Silver 4114-based server. Clients are distributed on 4 separate VMs with the same specification. We use the micro-benchmarks of BFT-SMaRt to measure throughput and latency. Throughput is measured at the leader replica, while latency is measured on the client side at a chosen client. Further, we conduct measurements with an increasing number of clients (up to 400) to generate higher workloads, thus cautiously approaching the limits of a deployed system until we reach saturation.

\subsubsection*{WAN Setup}
To benchmark the latency of a system when deployed in a wide-area network, we use the Amazon AWS cloud for placing several EC2 instances in different regions.
As we have no high hardware requirements for our latency experiments, we employ a \textit{t2.micro} instance type.
This type has 1 vCPU, 1 GB of RAM, and 8 GB standard SSD volume (gp2).
We choose the regions \textit{Virginia}, \textit{Ireland}, \textit{São Paulo} and \textit{Sydney} to place a dedicated client VM and a dedicated server VM in each region, with the leader being placed in Virgina.
Clients send requests simultaneously and we run sufficiently enough clients to reach a throughput of at least 100 requests per second. 
Further, clients measure latency as the time between sending a request and delivering a response for it.
Each client samples $1000$ requests with a payload size of 100 bytes. There is one measuring client in each region.

\subsection{Is the Read-only Optimization Worth It?}
\label{read-only-evaluation}

An important question to answer is the following one: \textit{is incorporating the read-only optimization into a BFT system really worth the trouble?} 
The benefits we may expect are twofold.
First, since read-only requests can skip the three-step agreement protocol, there are fewer requests that need to be ordered, thus requiring less coordination between replicas, which may lead to higher performance.

Second, in wide-area deployments, latency between replicas can be high, and thus coordination is expensive in terms of latency. Employing a two-phase request-response pattern rather than needing a total of 5 communication steps (3 additional communication steps are needed for ordering) may achieve substantially faster response times for read operations.

\subsubsection*{LAN Evaluation}
We first measured the performance of (optimized and non-optimized) read and write operations in our local setup for a payload of 4kB, with results shown in  Figure~\ref{fig:reads_vs_writes}. 
Here, we observe that with optimized reads, we can consistently achieve lower latencies at higher throughput levels, indicating that performance-wise a BFT system can highly benefit from supporting the read-only optimization. 

 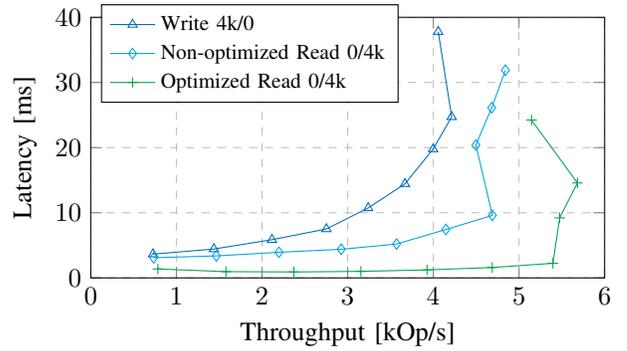
\begin{figure}[tb]
    \centering
     \begin{tikzpicture}
    \begin{axis}[
width=0.95\columnwidth,
height=0.57\columnwidth,
    xlabel={Throughput [kOp/s]},
    ylabel={Latency [ms]},
    xmin=0, xmax=6,
    ymin=0, ymax=40,
    xtick={0,1,2,3,4,5,6,7,8,9,10},
    ytick={0,10, 20 ,30, 40},
    legend pos=north west,
    legend style={at={(0.02 ,1.05)}},
    legend cell align={left},
    ymajorgrids=true,
    xmajorgrids=true,
    grid style=dashed,
    legend style={font=\footnotesize},
]

\addplot[
    color=NavyBlue,
    mark=triangle,
    ]
    table [x=throughput,y=latency] {data/writes_4k_0.dat};

    \addplot[
    color=Cerulean,
    mark=diamond,
    ]
    table [x=throughput,y=latency] {data/writes_0_4k.dat};
    
        \addplot[
    color=Green,
    mark=+,
    ]
    table [x=throughput,y=latency] {data/reads_0_4k.dat};

   
   \legend{Write 4k/0, Non-optimized Read 0/4k , Optimized Read 0/4k}
\end{axis}
\end{tikzpicture} 
    \caption{Performance comparison between read and write operations in BFT-SMaRt for request / response sizes (in byte).} 
    \label{fig:reads_vs_writes}
\end{figure}

 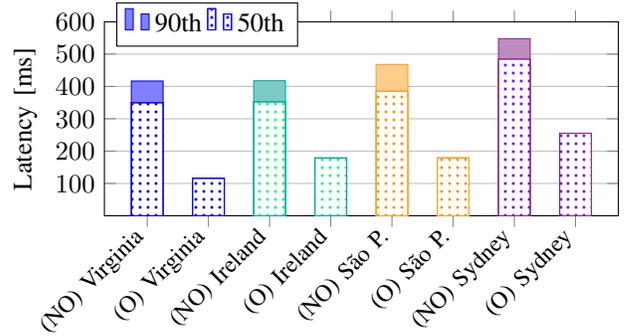
\begin{figure}[tb]
    \centering
    \begin{tikzpicture}
	\begin{axis}[ybar,
	bar shift=0pt,
    ymin=0,
	width=0.95\columnwidth,
    height=0.47\columnwidth,
legend style={at={(0.2,1.1)},
anchor=north,legend columns=-1},
    ylabel={Latency [ms]},
       ymajorgrids=true,
    symbolic x coords={ (NO) Virginia, (O) Virginia, (NO) Ireland, (O) Ireland, (NO) São P., (O) São P., (NO) Sydney, (O) Sydney},
    bar width=12pt,
    ytick = {100,200,300,400,500,600},
    xtick = {(NO) Virginia, (O) Virginia, (NO) Ireland, (O) Ireland, (NO) São P., (O) São P., (NO) Sydney, (O) Sydney},
    ymin= 0, ymax=600,
    x tick label style={rotate=45,anchor=east},
    xticklabel style = {font=\small},
	]

   	\addplot[draw=blue,fill=blue!50] coordinates
		{((O) Virginia,116.0922216) 
		 ((NO) Virginia, 416.7311044) };
		 
    \addplot[draw=blue,fill=white, postaction={
        pattern = dots,
        pattern color=blue
    }] coordinates
		{((O) Virginia,115.9397975) 
		 ((NO) Virginia, 349.4849055) };

			\addplot[draw=JungleGreen,fill=JungleGreen!50] coordinates 
	 	{((O) Ireland, 178.8320836) 
	 	 ((NO) Ireland, 417.6423582) };
	 	
 \addplot[draw=JungleGreen,fill=white, postaction={
        pattern = dots,
        pattern color=JungleGreen
    }] coordinates
		{((O) Ireland, 178.6414355) 
		 ((NO) Ireland, 352.315769) };
		
		   	\addplot[draw=YellowOrange,fill=YellowOrange!50] coordinates
		{((O) São P., 179.0964192) ((NO) São P., 468.0573602) };
    \addplot[draw=YellowOrange,fill=white, postaction={
        pattern = dots,
        pattern color=YellowOrange
    }] coordinates
		{((O) São P., 178.9278565) ((NO) São P., 385.252602) };

  	\addplot[draw=Plum,fill=Plum!50] coordinates
		{((O) Sydney, 255.2791092) ((NO) Sydney, 547.7926776) };
    \addplot[draw=Plum,fill=white, postaction={
        pattern = dots,
        pattern color=Plum
    }] coordinates
		{((O) Sydney, 255.1146705) ((NO) Sydney, 484.542422) };
	\legend{90th, 50th}
	\end{axis}
\end{tikzpicture} 
     \vspace{-1.08em}
    \caption{Latency comparison between optimized (O) and non-optimized (NO) read operations in BFT-SMaRt deployed in a WAN, as observed by clients in different regions.}
    \label{fig:reads_vs_writes_wan}
\end{figure}

\sketch{ab: it would be good to remove Read 4K/0 (it doesn't represent anything) and I would change 'Write 0/4K' to 'Non-optimized Read 0/4K'. Then we can cut the phrase I added from the text.}

\subsubsection*{Wide-area Evaluation}

We employed our WAN setup to investigate how much faster an optimized read is when compared with its non-optimized counterpart, which follows the three-step consensus protocol.
For the sake of a fair comparison, we are interested in the ``opportunity costs'', that is, latency gains that can be achieved when implementing the optimization. 
Thus, for the non-optimized read operations, we use the standard variant of BFT-SMaRt that orders all operations and allows clients to accept results as soon as $f+1$ matching replies are obtained, while in the optimized variant, a quorum of matching replies is needed.
This makes a slight difference: 
in a WAN setup, waiting for the fastest $f+1$ rather than the fastest $q > 2n/3$ out of $n$ replicas brings some advantage in terms of latency. 
Figure~\ref{fig:reads_vs_writes_wan} shows the median and 90th percentile of measured latencies of a $1000$ requests sample for a client in each region.

The results show that optimized reads are substantially faster than non-optimized operations. 
In particular, averaging across all sites, non-optimized reads have a latency $2.16\times$ higher than the optimized variant (median), and this difference increases to $2.54\times$ when considering the 90th percentile.
Furthermore, non-optimized reads display a higher variation in observed latency, with the 90th percentile of latencies being considerably higher than the median, when compared with optimized reads.
This can be explained by a random waiting time of a write request that arrives at the leader.
The leader needs to wait for an ongoing consensus instance to finish before it can put the pending request in a batch and propose it.
The expected waiting time is half of the consensus latency on BFT-SMaRt~\cite{berger20aware}.

\begin{tcolorbox}[colback=light-gray, boxrule=0pt, top=6pt, left=6pt, right=6pt, bottom=6pt]

Considering the obvious advantages of the read-only optimization for both local and WAN deployments, we conclude that it is worthwhile to have this optimization implemented in practical BFT SMR systems.

\end{tcolorbox}

\subsection{Comparison of Implemented Solutions}
\label{comparision}

Both \textit{decision broadcasting} and \textit{decision forwarding} modify the ordering phase. We want to investigate the impact their implementation has on performance. For this reason, we compare the original BFT-SMaRt system as a baseline with our implemented solutions. 

\subsubsection*{Fault-free Execution}
First, we evaluate the performance of the fault-free execution. We use our local setup and a request / response size of 4~kB/4~kB. We show the results of our measurements in  Figure~\ref{fig:performance_comparison}.

\sketch{ab: 4kB/4kB is supposed to represent what? 4kB/0 would be better}

 \begin{figure}[tb]
    \centering
     \begin{tikzpicture}
    \begin{axis}[
width=0.95\columnwidth,
height=0.57\columnwidth,
    xlabel={Throughput [kOp/s]},
    ylabel={Latency [ms]},
    xmin=0, xmax=4,
    ymin=0, ymax=40,
    xtick={0,1,2,3,4,5,6,7,8,9,10},
    ytick={0,10, 20 , 30, 40},
    legend pos=north west,
        legend cell align={left},
    ymajorgrids=true,
    xmajorgrids=true,
    grid style=dashed,
    legend style={font=\footnotesize},
]

\addplot[
    color=blue,
    mark=triangle,
    ]
    table [x=throughput,y=latency] {data/dec_broadcasting.dat};

        \addplot[
    color=red,
    mark=+,
    ]
    table [x=throughput,y=latency] {data/dec_fwd.dat};

        \addplot[
    color=PineGreen,
    mark=square,
    ]
    table [x=throughput,y=latency] {data/bftsmart.dat};
   
   \legend{decision broadcasting, decision forwarding,  standard BFT-SMaRt}
\end{axis}
\end{tikzpicture} 
     \vspace{-1.08em}
    \caption{Performance comparison in fault-free operation.}
    \label{fig:performance_comparison}
\end{figure}
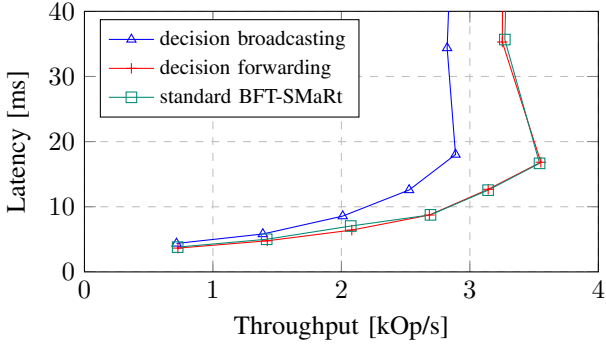

It turns out that decision broadcasting decreases observed performance significantly, while decision forwarding seems to display roughly the same performance as the original BFT-SMaRt.
Both observations are expected: always broadcasting decisions as an additional phase induces overhead in the system. 
The message pattern of decision forwarding corresponds to the normal pattern if no decision is requested. This is almost always the case under a correct leader, so the decision forwarding variant can achieve similar performance. 


\subsubsection*{Performance when under Attack}
In the next experiment, we investigate which effect a Byzantine leader that conducts an isolation attack has on the system.
First, we compare the performance of decision broadcasting* and decision forwarding* in our local setup (*-variants means \textit{under attack}).
We show the obtained results in Figure~\ref{fig:performance_comparison_attacked}.
Notice that we cannot evaluate the performance of the standard BFT-SMaRt when under attack, since it would prevent clients from accepting the operation result.

As we can see, the performance of decision forwarding* is now much closer to the performance of decision broadcasting*, with the former being only slightly better.
This is because when forwarding is used, two replicas are always probed to forward a decision, while in broadcasting all replicas disseminate the decision.
On an interesting side note, the performance actually increases when compared to the fault-free experiment.
The reason for this is that the isolation attack causes the leader to not propose to replica 3 and to not respond to the (up to 400) clients, thus relieving a lot of effort from the leader. Since the leader is typically the bottleneck, doing less work leads to increased performance. This applies to both decision broadcasting and forwarding when under attack. But for decision forwarding* this effect becomes less visible in the diagram since replicas now need to do extra work by forwarding decisions, which impacts performance negatively. 

\begin{figure}[!tb]
    \centering
      \begin{tikzpicture}
    \begin{axis}[
width=0.95\columnwidth,
height=0.57\columnwidth,
    xlabel={Throughput [kOp/s]},
    ylabel={Latency [ms]},
    xmin=0, xmax=4,
    ymin=0, ymax=40,
    xtick={0,1,2,3,4,5,6,7,8,9,10},
    ytick={0,10, 20 , 30, 40},
    legend pos=north west,
        legend cell align={left},
    ymajorgrids=true,
    xmajorgrids=true,
    grid style=dashed,
    legend style={font=\footnotesize},
]

\addplot[
    color=blue,
    mark=triangle,
    ]
    table [x=throughput,y=latency] {data/dec_bcast_attacked.dat};

        \addplot[
    color=red,
    mark=+,
    ]
    table [x=throughput,y=latency] {data/dec_fwd_attacked.dat};

   \legend{decision broadcasting (attacked), decision forwarding (attacked)}
\end{axis}
\end{tikzpicture} 
     \vspace{-1.08em}
    \caption{Performance comparison when under attack.}
    \label{fig:performance_comparison_attacked}
\end{figure}
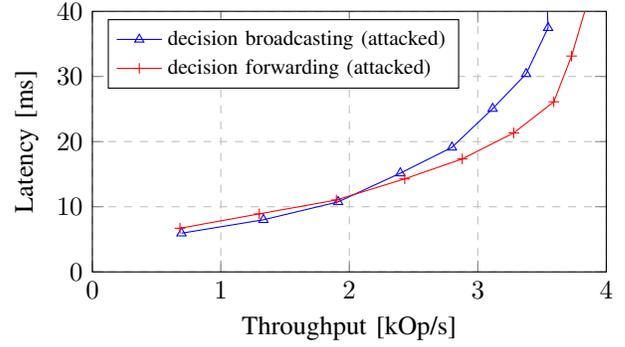

\begin{tcolorbox}[colback=light-gray, boxrule=0pt, top=6pt, left=6pt, right=6pt, bottom=6pt]

Decision forwarding performs better than decision broadcasting when  replicas are placed in a LAN. It avoids unnecessary overhead in a fault-free operation, and also achieves better performance during an isolation attack.

\end{tcolorbox}

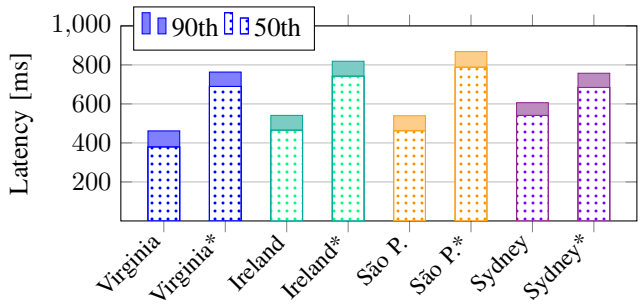
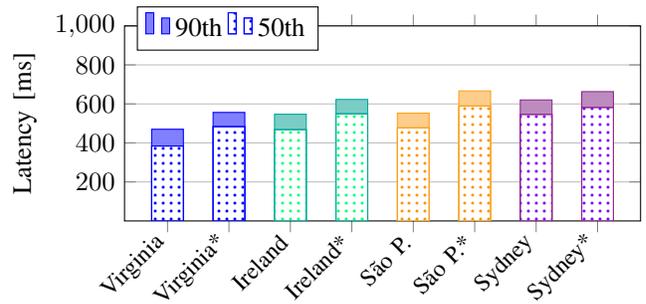
\begin{figure*}[tb]
    \centering
    \begin{subfigure}[b]{.49\textwidth}
		\begin{tikzpicture}
	\begin{axis}[ybar,
	bar shift=0pt,
    ymin=0,
	width=0.95\columnwidth,
    height=0.47\columnwidth,
legend style={at={(0.2,1.1)},
anchor=north,legend columns=-1},
    ylabel={Latency [ms]},
       ymajorgrids=true,
    symbolic x coords={Virginia, Virginia*, Ireland, Ireland*, São P., São P.*, Sydney, Sydney*},
    bar width=12pt,
    ytick = {200, 400, 600, 800, 1000},
    xtick = {Virginia, Virginia*, Ireland, Ireland*, São P., São P.*, Sydney, Sydney*},
    ymin= 0, ymax=1000,
    x tick label style={rotate=45,anchor=east},
    xticklabel style = {font=\small},
	]

   	\addplot[draw=blue,fill=blue!50] coordinates
		{(Virginia, 461.613517) 
		 (Virginia*, 763.292463) };
		 
    \addplot[draw=blue,fill=white, postaction={
        pattern = dots,
        pattern color=blue
    }] coordinates
		{(Virginia, 379.379647) 
		 (Virginia*, 689.422365) };

			\addplot[draw=JungleGreen,fill=JungleGreen!50] coordinates 
	 	{(Ireland, 541.0428058) 
	 	 (Ireland*, 818.8077777) };
	 	
 \addplot[draw=JungleGreen,fill=white, postaction={
        pattern = dots,
        pattern color=JungleGreen
    }] coordinates
		{(Ireland, 465.3432775) 
		 (Ireland*, 742.2045105) };

		   	\addplot[draw=YellowOrange,fill=YellowOrange!50] coordinates
		{(São P., 539.3239512) (São P.*, 868.3656947) };
		
    \addplot[draw=YellowOrange,fill=white, postaction={
        pattern = dots,
        pattern color=YellowOrange
    }] coordinates
		{(São P., 462.2330435) (São P.*, 787.9565115) };

  	\addplot[draw=Plum,fill=Plum!50] coordinates
		{(Sydney, 606.2821636) (Sydney*, 757.1850074) };
    \addplot[draw=Plum,fill=white, postaction={
        pattern = dots,
        pattern color=Plum
    }] coordinates
		{(Sydney, 540.7601335) (Sydney*, 684.2088015) };

	\legend{90th, 50th}
	\end{axis}
\end{tikzpicture} 
		 \vspace{-1.08em}
		\caption{\textit{decision forwarding} in normal operation and when attacked(*).}
		\label{fig:dec_fwd_wan} 
	\end{subfigure}
	\begin{subfigure}[b]{.49\textwidth}
		\begin{tikzpicture}
	\begin{axis}[ybar,
	bar shift=0pt,
    ymin=0,
	width=0.95\columnwidth,
    height=0.47\columnwidth,
legend style={at={(0.2,1.1)},
anchor=north,legend columns=-1},
    ylabel={Latency [ms]},
       ymajorgrids=true,
    symbolic x coords={Virginia, Virginia*, Ireland, Ireland*, São P., São P.*, Sydney, Sydney*},
    bar width=12pt,
    ytick = {200, 400, 600, 800, 1000},
    xtick = {Virginia, Virginia*, Ireland, Ireland*, São P., São P.*, Sydney, Sydney*},
    ymin= 0, ymax=1000,
    x tick label style={rotate=45,anchor=east},
    xticklabel style = {font=\small},
	]

   	\addplot[draw=blue,fill=blue!50] coordinates
		{(Virginia, 470.103887) 
		 (Virginia*, 556.3205958) };
		 
    \addplot[draw=blue,fill=white, postaction={
        pattern = dots,
        pattern color=blue
    }] coordinates
		{(Virginia, 384.074976) 
		 (Virginia*, 483.5200085) };

			\addplot[draw=JungleGreen,fill=JungleGreen!50] coordinates 
	 	{(Ireland, 547.2314004) 
	 	 (Ireland*, 622.8652112) };
	 	
 \addplot[draw=JungleGreen,fill=white, postaction={
        pattern = dots,
        pattern color=JungleGreen
    }] coordinates
		{(Ireland, 468.0768775) 
		 (Ireland*, 549.297033) };

		   	\addplot[draw=YellowOrange,fill=YellowOrange!50] coordinates
		{(São P., 552.8190852) (São P.*, 666.5969562) };
		
    \addplot[draw=YellowOrange,fill=white, postaction={
        pattern = dots,
        pattern color=YellowOrange
    }] coordinates
		{(São P., 477.2205685) (São P.*, 589.197462) };

  	\addplot[draw=Plum,fill=Plum!50] coordinates
		{(Sydney, 620.3523147) (Sydney*, 662.9971418) };
    \addplot[draw=Plum,fill=white, postaction={
        pattern = dots,
        pattern color=Plum
    }] coordinates
		{(Sydney, 546.007206) (Sydney*, 581.75605) };

	\legend{90th, 50th}
	\end{axis}
\end{tikzpicture}
	    \vspace{-1.08em}
		\caption{\textit{decision broadcasting} in normal operation and when attacked(*).}
		\label{fig:dec_bcast_wan} 
	\end{subfigure}
    \caption{Latency comparison of implemented variants in a WAN deployment as observed by clients located in different regions.}
    \label{fig:dec_variants_wan}
\end{figure*}

\subsubsection*{Wide-area Evaluation}
Finally, we want to see the implications our solutions have on BFT SMR systems that are deployed in WAN environments.
For this reason, we evaluate decision broadcasting and decision forwarding both in fault-free executions and when under attack (*-variants).
We use our WAN setup and show measurement results in Figure~\ref{fig:dec_variants_wan}.
 
We can observe that there is quite a perceivable difference in how latencies are affected when an isolation attack takes place.
When attacked, decision broadcasting performs better than decision forwarding across all sites of the system.
In particular, the median latency in decision forwarding increases by 57\% while in decision broadcasting it increases only by 18\%, when averaging across all sites.

 The explanation for this is that in decision broadcasting, replicas disseminate a decision as soon as possible to all others, while in decision forwarding a \texttt{REQ-DECISION} message is needed first, which causes an additional communication step that adds on the overall latency (see Figure~\ref{fig:dec_fwd_wan}).
 There is also a second reason: replicas do not decide at the same time but at different times. Since all replicas broadcast a decision, it is the fastest replica among them that determines the speed for the isolated replicas to catch up. This is quite a beneficial side-effect for decision broadcasting as can be seen in Figure~\ref{fig:dec_bcast_wan}.

\begin{tcolorbox}[colback=light-gray, boxrule=0pt, top=6pt, left=6pt, right=6pt, bottom=6pt]

Decision broadcasting appears to be a reasonable alternative for use in WANs.
When attacked, it avoids an extra communication step. The extra redundancy of letting all replicas broadcast the decision is beneficial (especially for small messages): the fastest broadcast decision arriving determines the tempo for isolated replicas to catch up.

\end{tcolorbox}
 
\section{Related Work}
\label{related-work}

We discuss several works on SMR read optimizations and further works that improve the robustness of BFT systems.


\subsubsection*{Optimizing Reads in CFT}
 
 
Chandra et al. introduced the idea of a \textit{leader lease}~\cite{chandra2007paxos}: as long as a stable Paxos leader has the lease, it is guaranteed to have up-to-date information in its local data, thus reads  can be served by the leader \textit{locally} and are never stale (satisfying linearizability). This idea is employed in several practical systems, e.g.,
Spanner~\cite{corbett2013spanner}, where \textit{snapshot reads} allow clients to read locally at specified timestamps, or Chubby~\cite{burrows2006chubby}.
 
The idea of leases allows a wide design space, e.g., besides leader leases it is also possible to grant leases to a set of replicas~\cite{moraru2014paxos} or even all~\cite{baker2011megastore}.
For instance, Moraru et al. presented \textit{Paxos Quorum Leases}~\cite{moraru2014paxos}, in which leases for different objects can be managed by individually chosen quorums of replicas.
Choosing a Paxos quorum for a lease helps with managing leases and reducing overhead as lease revocation and Paxos messages can be combined.

ZooKeeper~\cite{hunt2010zookeeper} allows fast local reads from \textit{any} server, but relaxes the consistency guarantee: it offers only sequential consistency but not linearizability, so stale reads can occur in the system.
Note that strict SMR requires linearizability.
  
\subsubsection*{Optimizing Reads in BFT}

The read-only optimization for BFT SMR that enables fast reads without requiring ordering and not sacrificing linearizability was first introduced by Castro and Liskov in PBFT~\cite{castro1999practical}. 
Several other BFT SMR systems that later followed also provide a read-only optimization, e.g.,
Q/U~\cite{abd2005fault},
HQ~\cite{cowling2006hq},
PBFT-CS~\cite{wester2009tolerating},
UpRight~\cite{clement2009upright},
Zyzzyva~\cite{kotla2007zyzzyva},
and BFT-SMaRt~\cite{bessani2014state} (and its variants~\cite{sousa2015separating, berger20aware}).

Further, to the best of our knowledge, the vulnerability we discussed in Section~\ref{bft-smart-problem} does not apply to all of these systems.
Instead, it exists in SMR designs that (1)~are agreement-based and employ the same normal-case ordering pattern as PBFT, which does not guarantee 
a sufficient number of replicas 
(needed for linearizability) 
can reply to the client, or (2)~do not enrich replica state by the last response sent to each client, so that after state transfer correct replicas are unable to respond to re-transmitted requests.
On a closer look, we think this vulnerability currently exists in several works (e.g., \cite{castro1999practical, cowling2006hq, wester2009tolerating, bessani2014state, sousa2015separating, berger20aware}).

It is noteworthy that there are BFT SMR designs with a correct read-only optimization.
An example is Zyzzyva~\cite{kotla2007zyzzyva}, which comes with a more sophisticated design than PBFT. It uses a combination of a forwarding technique (proposals are signed and can be queried from other replicas) and a sub-protocol for blaming a faulty leader:
if a Byzantine leader isolates a correct replica by sending conflicting proposals, then the correct replica can craft a proof of misbehavior (POM), which contains signed messages from the leader that reveal his misbehavior, and broadcasts it. Upon receiving such a POM, replicas trigger a view change and elect a new leader.
Note, that an isolation attack generally cannot be detected as long as the malicious leader only omits sending messages.
Further, in case of an isolation attack, forwarding decisions is generally preferable to forwarding proposals as it avoids additional latency induced by the subsequent protocol steps.
 
\subsubsection*{Optimizing Reads in a Hybrid Fault Model}
Troxy~\cite{li2018troxy} employs trusted execution environments (TEEs) to shift client functionality (like output consolidation) to the replica side, to grant clients transparent access to the BFT system. It also features a novel read optimization, in which Troxy actively manages a \textit{fast-read cache}, which reflects the state changes of the latest write, guaranteeing strong consistency. Functionality inside the TEE is assumed to never display Byzantine behavior and replicas need to be equipped with a TEE.
 
\subsubsection*{SMR Robustness}
Aardvark~\cite{clement2009making} improves the resilience of BFT systems against performance degradation attacks by introducing a set of design principles such as signed client requests, resource isolation, and regular view changes.
Spinning~\cite{veronese2009spin} proposes to let replicas alternately become the leader.
Prime~\cite{amir2011prime} enhances BFT systems with novel mechanisms against denial of service, corrupt message authentication code, and Pre-Prepare-Delay attacks.
RBFT~\cite{aublin2013rbft} proposes to redundantly run several instances of the ordering protocol, so that a misbehaving leader (that appears noticeable slow) can be detected and replaced.
None of these robustness enhancements seems to completely avoid the attack we presented.
 
 

\section{Conclusion}
\label{conclusion}

In this paper, we first explained an existing vulnerability in a commonly used BFT SMR design that is employed in PBFT~\cite{castro1999practical} and BFT-SMaRt~\cite{bessani2014state}.
This vulnerability is a consequence of the incorporation of an optimization that aims to avoid the costs of consensus when executing read-only operations.
We recalled that to preserve linearizability when using this optimization, clients need to wait for a larger quorum of replies. 
Further, we presented an attack in which a Byzantine leader isolates up to $f$ correct replicas, and thus can prevent clients from obtaining such a quorum of replies, hence breaking the liveness of the system.  

Subsequently, we showed two distinct, generic solutions that easily fit into both PBFT and BFT-SMaRt:  \textit{decision broadcasting} and \textit{decision forwarding}.
These solutions ensure execution in all correct replicas and enable correct system behavior when the read-only optimization is used: we can enable fast reads while preserving both linearizability and liveness properties of the system.

A further takeaway of this paper is that if the read-only optimization is used, state transfer needs to make sure the transferred state includes the last reply sent to each client to ensure recovered replicas can reply to outstanding requests issued by any client.

Moreover, we evaluated the advantage of the read-only optimization as well as our implemented variants, decision broadcasting and decision forwarding, in a LAN and in a WAN setup.
Evaluation results indicate that reads are  beneficial for system performance.
In a WAN environment, optimized reads are also substantially faster than their non-optimized counterparts: non-optimized read latencies are higher by a factor of $2.16\times$ to $2.54\times$.
When comparing decision broadcasting with decision forwarding, we observed that the forwarding technique achieves better performance due to its small overhead and is the suitable choice for replicas that are deployed close to each other (e.g., in a data center).
In a WAN environment, decision broadcasting achieves better latencies than forwarding during an isolation attack conducted by a Byzantine leader.

\bibliographystyle{IEEEtran}
\bibliography{IEEEabrv,bibliography}

\appendix

In this appendix we present a correctness proof for the two proposed modifications for dealing with the isolation attack. 
Our proofs consider specifically BFT-SMaRt~\cite{bessani2014state}, but can be easily adapted for PBFT~\cite{castro1999practical}.

\subsection{Correctness of Decision Broadcasting} 

\begin{theorem}
The decision broadcasting protocol modification preserves the following \textit{safety} property of the protocol: if a decision is made, then it is the same in all correct replicas.
\end{theorem}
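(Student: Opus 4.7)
The plan is to show that the additional \texttt{FWD-DECISION} broadcast cannot cause two correct replicas to decide differently in the same consensus instance, by reducing safety of the modified protocol to the safety of the underlying BFT-SMaRt consensus plus the unforgeability of the cryptographic proof $\Gamma$.

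First, I would observe that, after the modification, a correct replica can come to decide a value $v$ for a consensus instance $c$ via exactly two paths: (i) the standard path, in which it collects a quorum of $2f+1$ matching \texttt{ACCEPT} messages as in unmodified BFT-SMaRt; or (ii) the forwarded path, in which it receives some \texttt{FWD-DECISION}$(c,v,\Gamma)$ message that passes the \texttt{verify} predicate. In either case, the guard $\neg\texttt{isDecided}(c)$ in Algorithm~\ref{algo:Dec-Broadcasting} together with setting \texttt{decided}$_c\leftarrow\texttt{true}$ inside \texttt{decide}$(c,v)$ guarantees that a single replica decides at most once per instance, so the only danger is two replicas ending up with different values.

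Second, I would argue that for any fixed instance $c$, at most one value can possess a valid proof $\Gamma$. By definition, a valid $\Gamma$ is a set of $2f+1$ correctly signed \texttt{ACCEPT} messages for $(c,v)$; if there were a second valid $\Gamma'$ for $(c,v')$ with $v'\ne v$, the two supporting sets would intersect in at least $f+1$ replicas, hence in at least one correct replica. But correct replicas never sign two conflicting \texttt{ACCEPT} messages within the same consensus instance (a property inherited from BFT-SMaRt's underlying Byzantine Paxos), and signatures are unforgeable under our cryptographic assumptions, so such a $\Gamma'$ cannot exist. Consequently, path (ii) can only deliver the same value as path (i), and the original safety proof of BFT-SMaRt covers the case where two replicas both take path (i).

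Third, I would combine these facts: suppose correct replicas $r$ and $r'$ decide values $v$ and $v'$ respectively in instance $c$. Each decision is backed by some valid proof $\Gamma$ (either the one assembled locally on path (i) and stored in \texttt{Log}, or the one that accompanied the \texttt{FWD-DECISION} on path (ii)). By the previous paragraph, both proofs must refer to the same value, so $v=v'$.

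I expect the main obstacle to be stating cleanly the link between $\Gamma$'s validity and the fact that no correct replica ever signs two conflicting \texttt{ACCEPT}s for the same instance; this property is not re-derived in the paper and must be quoted from the underlying BFT-SMaRt consensus, so the proof essentially hinges on a clean \emph{externally verifiable proof} invariant that I would state up front as a lemma before attacking the theorem itself.
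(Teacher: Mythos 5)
Your proposal is correct and takes essentially the same route as the paper: both arguments reduce safety to the intersection of the two \texttt{ACCEPT} quorums (the one used on the normal path and the one contained in $\Gamma$) in at least $f+1$ replicas, hence in at least one correct replica, and then invoke non-equivocation of correct replicas to derive a contradiction. Your version merely makes explicit, as a preliminary lemma, the ``at most one value admits a valid proof per instance'' invariant that the paper leaves implicit.
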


\begin{proof}
This is because a decision is either made by the normal execution pattern in which a quorum\footnote{This is a generalization of the quorum size for arbitrary $f$ and $n$ as employed in BFT-SMaRt. In a typical $n=3f+1$ system, this quorum size is equal to $2f+1$.} $Q$ of $\lceil \frac{n+f+1}{2} \rceil$ received \texttt{ACCEPT}s are used for some value $v$, or, by validating a forwarded decided value $v'$ using an externally verifiable proof $\Gamma$, which contains a quorum $Q'$ of signed \texttt{ACCEPT} messages from replicas supporting $v'$. Since $\vert Q \cap Q'  \vert \geq f+1 $ holds, assuming that $v \neq v' $ would imply at least one \textit{correct} replica supported $v$ in $Q$ and $v'$ in $Q'$, which is a contradiction because correct replicas never equivocate, so $v'=v$ must hold.
\end{proof}

Before we continue with our liveness proof, we note that both PBFT and BFT-SMaRt guarantee that a request is eventually executed in at least $f+1$ correct replicas.
In cases in which a Byzantine leader reigns and less than $f+1$ correct replicas decide a value, we know that at least $f+1$ correct replicas stay undecided.
After a timeout, these replicas trigger a synchronization phase in BFT-SMaRt (or a view change in PBFT).
In the synchronization phase, replicas synchronize their logs and reach a similar state.
Note that this procedure might repeat (e.g., consider the possibility of $f$ cascading leader failures) until a stable leader takes over and ensures a decision is reached in at least $f+1$ correct replicas.

\begin{theorem}
The decision broadcasting protocol modification ensures the following \textit{liveness} property: a request issued by a correct client will be eventually executed by all correct replicas.
\end{theorem}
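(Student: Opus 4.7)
The plan is to leverage the observation made just before the theorem: the underlying BFT-SMaRt protocol (together with its synchronisation/view-change subprotocol, repeated as often as necessary until a stable correct leader is in charge after GST) already guarantees that any request $r$ issued by a correct client is eventually placed into some consensus instance $c$ that is decided by at least $f+1$ correct replicas. My task is then to upgrade this ``at least $f+1$'' into ``all correct replicas,'' which is exactly what the extra \texttt{FWD-DECISION} phase is designed to achieve.

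First I would fix any one correct replica $p$ that decides instance $c$ with value $v$ through the standard three-phase path. By Algorithm~\ref{algo:Dec-Broadcasting}, $p$ sends $\langle\texttt{FWD-DECISION}, c, v, \Gamma\rangle$ to every replica, where $\Gamma$ is the quorum of signed \texttt{ACCEPT}s that justified $p$'s decision. Under the reliable-channel assumption made in Section~\ref{solution} and partial synchrony after GST, this message is eventually delivered to every correct replica. Any correct replica $q$ that has not yet decided $c$ then checks \texttt{verify}($c,v,\Gamma$); by the safety theorem already proved, $\Gamma$ attests to the unique decidable value, so verification succeeds and $q$ decides $v$. Hence every correct replica eventually decides $c$ with value $v$.

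Next I would lift ``decide'' to ``execute.'' Because BFT-SMaRt executes requests in consensus order, a correct replica that learns of $c$ via \texttt{FWD-DECISION} may still be waiting on some earlier instance $c' < c$. However, the same argument applies inductively to every $c' \le c$: whichever correct replica first decided $c'$ broadcasts its own \texttt{FWD-DECISION}, which is eventually delivered and verified everywhere. A straightforward induction on the instance number therefore shows that the execution queue at each correct replica progresses past $c$, at which point $r$ is executed. Combining this with the ``eventual placement of $r$ into some instance $c$'' argument above yields the theorem.

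The hard part will be making precise the interaction between the decision broadcasting rule and the synchronisation phase that BFT-SMaRt already runs when correct replicas are isolated. I need to argue that even while the isolated replicas would otherwise trigger a synchronisation, the arrival of a valid \texttt{FWD-DECISION} is not discarded (it remains a legitimate way to decide in the current view) and that a \texttt{FWD-DECISION} produced in view $v$ is still accepted in view $v' > v$ since $\Gamma$ is externally verifiable and view-independent. Once this compatibility point is justified, the rest of the proof is routine: reliable delivery plus external verifiability plus induction on consensus instances give liveness for every request of every correct client.
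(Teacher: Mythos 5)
Your proposal is correct and follows essentially the same route as the paper's proof: first use the Mod-SMaRt liveness result to obtain a decision in at least $f+1$ correct replicas, then argue that the \texttt{FWD-DECISION} broadcast (with its externally verifiable proof $\Gamma$ and the echo rule) propagates that decision to every correct replica. Your explicit induction on consensus instance numbers to lift ``decide'' to ``execute,'' and your flagging of the view-change compatibility of $\Gamma$, are details the paper leaves implicit but do not change the structure of the argument.
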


\begin{proof}
We first show that if a value is decided in some correct replica, then it is eventually decided in \textit{all} correct replicas.
For the start of our proof we assume that the value $v$ was decided in some correct replica $r$ for consensus instance $c$. Generally, a decision can be made by either gathering enough \texttt{ACCEPT} messages for a prepared proposal $v$ \textit{or} by receiving a $\langle \texttt{FWD-DECISION}, c, v, \Gamma \rangle$ message with valid proof $\Gamma$.
In both of these cases, upon deciding $v$ for consensus instance $c$, replica $r$ broadcasts the $\langle \texttt{FWD-DECISION}, c, v, \Gamma \rangle$ message to all other replicas, thus enabling all correct replicas to eventually decide $v$ (either during normal case operation pattern or after receiving this forwarded decision).
In particular, the fact that any correct replica that utilizes a $\langle \texttt{FWD-DECISION}, c, v, \Gamma \rangle$ message to decide will echo this message to all other replicas preserves the \textit{Termination} property of the consensus primitive.

Until now, we only showed that a decision made in a single correct replica eventually propagates to all correct replicas.
To show that an operation $o$ that needs to be ordered is eventually executed in \textit{all} correct replicas, we suggest to use the above result in combination with the liveness proof of the Mod-SMaRt algorithm in \cite{sousa2012byzantine} (Appendix, Theorem 2, using Lemmata 1-3). 
The Mod-SMaRt algorithm has specific behavioral properties that (i)~an operation received in a single correct replica is eventually received in all correct replicas, (ii)~eventually (after at most $f$ regency changes after \textit{GST}), a leader successfully drives progress in the protocol, and (iii)~if a single correct replica starts consensus, eventually consensus starts in at least $n-f$ replicas.

Using these properties, Sousa and Bessani~\cite{sousa2012byzantine} proved that an operation is eventually decided and executed in at least $f+1$ correct replicas.
This result combined with the decision broadcasting modification ensures that an operation executed by a correct replica is executed by all correct replicas.
\end{proof}

\subsection{Correctness of Decision Forwarding} 

\begin{theorem}
The decision forwarding protocol modification preserves the following \textit{safety} property of the protocol: if a decision is made, then it is the same in all correct replicas.
\end{theorem}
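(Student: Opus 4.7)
The plan is to mirror the safety argument already given for decision broadcasting, since the two protocols differ only in \emph{when} a \texttt{FWD-DECISION} message is generated and circulated, not in the structure of what it carries. First I would observe that, just as in Algorithm~\ref{algo:Dec-Broadcasting}, the \texttt{FWD-DECISION} messages used in Algorithm~\ref{algo:Forwarding} still contain an externally verifiable proof $\Gamma$ that, by construction, is a quorum of $\lceil \frac{n+f+1}{2}\rceil$ signed \texttt{ACCEPT} messages for the decided value. The \texttt{REQ-DECISION} step does not weaken the proof: a replica only ever emits a \texttt{FWD-DECISION} after it has itself decided, and the ``echo'' in Lines 13--17 only resends a message whose $\Gamma$ has already been validated via \texttt{verify}. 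Thus every accepted decision --- whether reached along the normal three-phase path or via a forwarded message --- is backed by a quorum of signed \texttt{ACCEPT}s.

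The core step is then the standard quorum-intersection argument. Suppose for contradiction that two correct replicas decide distinct values $v \neq v'$ in the same consensus instance $c$. Each decision is witnessed by a quorum $Q$, $Q'$ of $\lceil \frac{n+f+1}{2}\rceil$ signed \texttt{ACCEPT}s. I would invoke $|Q \cap Q'| \ge f+1$ to conclude that some correct replica must have signed \texttt{ACCEPT} for both $v$ and $v'$ in instance $c$, contradicting the non-equivocation of correct replicas in BFT-SMaRt. Hence $v = v'$.

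The main (minor) obstacle I expect is not the intersection argument itself but justifying that the modification does not introduce a \emph{new} way to decide that would bypass the quorum-of-\texttt{ACCEPT}s property. Concretely, I would need to argue that (i) a replica never marks itself decided on the basis of the $f+1$ \texttt{ACCEPT}s that trigger \texttt{REQ-DECISION} in Lines 5--7 --- those merely prompt a query, and the decision predicate is unchanged; and (ii) the echoing of \texttt{FWD-DECISION} in Line 16 cannot launder an unverified proof, because the guard on Line 14 requires \texttt{verify}$(c,v,\Gamma)$ to hold before the message is stored, re-broadcast, or delivered. Once those two points are dispatched, the safety proof reduces verbatim to the decision-broadcasting case and the theorem follows.
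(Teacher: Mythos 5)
Your proposal is correct and follows essentially the same route as the paper, whose proof of this theorem simply states that it is analogous to the decision-broadcasting case and relies on the same quorum-intersection argument over $\lceil\frac{n+f+1}{2}\rceil$ signed \texttt{ACCEPT}s. You merely make explicit the two sanity checks the paper leaves implicit (that the $f+1$ \texttt{ACCEPT}s triggering \texttt{REQ-DECISION} do not constitute a new decision path, and that the echo cannot propagate an unverified proof), which is a faithful elaboration rather than a different argument.
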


\begin{proof}
Analogous to the proof for decision broadcasting.
\end{proof}

\begin{theorem}
The decision forwarding protocol modification ensures the following \textit{liveness} property: a request issued by a correct client will be eventually executed by all correct replicas.
\end{theorem}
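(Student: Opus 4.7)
The plan is to follow the two-tier structure of the decision broadcasting liveness proof: first show that once any correct replica decides consensus instance $c$, every correct replica eventually decides $c$; then lift this to request execution by invoking the unchanged Mod-SMaRt liveness argument of Sousa and Bessani~\cite{sousa2012byzantine}. Safety is analogous to the broadcasting case, because a correct replica installs a value only through a $\langle \texttt{FWD-DECISION}, c, v, \Gamma \rangle$ whose proof $\Gamma$ contains a quorum of signed ACCEPTs, so the same quorum-intersection argument transfers without change.

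The first main step is to show that every isolated correct replica $r'$ eventually triggers the REQ-DECISION path. By Mod-SMaRt liveness, after GST and once a stable leader is in place, a set $D$ of at least $f+1$ correct replicas decide $c$ through the normal three-phase pattern; in particular each member of $D$ broadcasts a signed $\langle \texttt{ACCEPT}, c, h \rangle$ for the decided hash $h$. Since ACCEPTs are all-to-all and $r'$ is correct, $r'$ eventually collects $f+1$ such messages; because $r'$ is isolated, its local proposal is either $\bot$ or does not hash to $h$, so the guard on Line~5 of Algorithm~\ref{algo:Forwarding} fires and $r'$ queries $2f$ of the $3f$ other replicas.

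The step I expect to be the main obstacle is proving that at least one of those $2f$ queries reaches a member of $D$, regardless of which $2f$ targets the implementation happens to pick. The counting goes as follows: because $r'$ is isolated and has not received a matching proposal it cannot decide via the normal pattern, hence $r' \notin D$, so $D$ is contained in the $3f$ other replicas; therefore at most $3f - |D| \le 2f - 1$ of the other replicas lie outside $D$, and any selection of $2f$ of them must contain at least one replica of $D$. That replica, being correct and eventually deciding $c$, will send $r'$ a valid $\langle \texttt{FWD-DECISION}, c, v, \Gamma \rangle$ via the deferred event handler on Lines~8--10, so $r'$ decides $c$. To close the argument I would invoke the echo on Line~16: it guarantees that decisions installed through forwarding still propagate to every correct replica (including those that might never have collected $f+1$ ACCEPTs on their own before an adversarial leader stops cooperating), thereby preserving the consensus Termination property. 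Combined with the execution rule that delivers in consensus order and applied inductively to all prior instances, this yields that every operation ordered in some consensus $c$ is ultimately executed by every correct replica.
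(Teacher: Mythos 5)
Your overall strategy matches the paper's: show that a decision reached in one correct replica propagates to all via the \texttt{REQ-DECISION}/\texttt{FWD-DECISION} mechanism plus the echo, then delegate ``some correct replica eventually decides'' to the Mod-SMaRt liveness proof. Your counting step (the $3f$ non-$r'$ replicas contain at most $2f-1$ replicas outside $D$, so any $2f$ probe targets hit $D$) is correct and is in fact a cleaner rendering of the paper's Case~2.b.i, which phrases it as an intersection between the $f+1$ decided correct replicas and the $2f+1$ probed ones.

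The genuine gap is that your entire rescue argument is conditioned on the premise that ``a set $D$ of at least $f+1$ correct replicas decide $c$ through the normal three-phase pattern.'' That premise does not hold for the instance that matters: under a Byzantine leader it is possible that only between $1$ and $f$ correct replicas decide $c$ via the normal pattern (the adversary hands out its \texttt{ACCEPT}s selectively so that only a chosen few correct replicas assemble a full $2f+1$ quorum). In that situation $|D| \le f$, your bound $3f - |D| \le 2f-1$ fails, and an isolated replica's $2f$ queries can miss every decided replica, so the forwarding mechanism alone rescues nobody. Appealing to ``after GST under a stable leader'' does not repair this, because once $f+1$ or more correct replicas have decided $c$ under the faulty leader, the remaining stuck replicas are too few to force a view change, and a future stable leader never re-drives $c$ --- so the premise must be established for the execution of $c$ under the current, possibly faulty, leader. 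The paper closes this hole as a separate case (Case~2.b.ii): when fewer than $f+1$ correct replicas decide, at least $f+1$ correct replicas remain undecided, which is exactly enough to send \texttt{STOP} messages and trigger the synchronization phase; liveness in that branch comes from the leader-change machinery, not from the forwarding protocol, and only after a stable leader completes $c$ in at least $f+1$ correct replicas does your counting argument take over for the residual $\le f$ stragglers. You need to add this dichotomy on $|D|$ explicitly. A smaller omission: a stuck correct replica that \emph{did} receive the matching proposal never fires the guard on Line~5, so it is rescued only by the Line~16 echo; your last sentence gestures at this but should note that such a replica exists only when some other correct replica lacks the proposal (otherwise all correct replicas decide normally), so an echo is indeed guaranteed to be produced.
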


\begin{proof}
We will show that if a value $v$ is decided in \textit{at least one} correct replica, then it is eventually decided in \textit{all} correct replicas.
This result can be combined with the liveness proof of the Mod-SMaRt algorithm in \cite{sousa2012byzantine} (Appendix, Theorem 2), which shows any operation $o$ that needs to be ordered is eventually executed in at least $f+1$ correct replicas.

For the start of our proof we assume that the value $v$ was decided in some correct replica $r$ in consensus instance $c$. When using the decision forwarding extension, a decision can be made by either gathering enough \texttt{ACCEPT} messages for a prepared proposal $v$ \textit{or} by receiving a $\langle \texttt{FWD-DECISION}, c, v, \Gamma \rangle$ message with valid proof $\Gamma$, so we need to consider both cases.

\textit{(Case 1) Forwarded Decision:} The decision $v$ was made in $r$ by using a received, valid $\langle \texttt{FWD-DECISION}, c, v, \Gamma \rangle$ message. 
In this case, since replica $r$ decided $v$ using a forwarded decision, it will broadcast $\langle \texttt{FWD-DECISION}, c, v, \Gamma \rangle$ to all other replicas, which implies all correct replicas can eventually decide $v$.

\textit{(Case 2) Normal Operation Pattern:} The decision $v$ was made in $r$ by following the normal operation pattern.
In this case, we need to first consider how many correct replicas received a \texttt{PROPOSE} for $v$ in consensus instance $c$:
    
    \textit{(Case 2.a)} A \texttt{PROPOSE} for $v$ is received by \textit{all} correct replicas: 
   the concrete behavior depends on the order of messages received:
    
    \textit{(Case 2.a.i)}
    If the proposal is delayed, and some correct replica $r'$ receives $f+1$ \texttt{ACCEPT} messages before it receives the leader's proposal, then it sends a $\langle \texttt{REQ-DECISION}, c \rangle$ message to $2f$ other replicas. If $r'$ receives a valid $\langle \texttt{FWD-DECISION}, c, v, \Gamma \rangle$ message before completing the normal case operation pattern and uses the forwarded decision to decide, then it broadcasts the $\langle \texttt{FWD-DECISION}, c, v, \Gamma \rangle$ to all, thus enabling all correct replicas to eventually decide.
    Note that \textit{asking} for a decision does not block $r'$ from further participating in the agreement protocol.
        \balance
        
    \textit{(Case 2.a.ii)} 
    If no correct replica sent a forwarding request, it means all correct replicas executed the three-step agreement protocol for $c$ and decided.


    \textit{(Case 2.b)} There is a correct replica $r'$ that did not receive a \texttt{PROPOSE} for $v$ in $c$.
    Here, we need to consider how many correct replicas decided during the normal case operation pattern:
    
    \textit{(Case 2.b.i)} If at least $f+1$ correct replicas decide, $r'$ is guaranteed to have received $f+1$ \texttt{ACCEPTS} from these replicas and thus it sends a $\langle \texttt{REQ-DECISION}, c \rangle$ message to $2f$ other replicas.
    Since the set of $f+1$ correct and decided replicas intersects with any set of $2f+1$ probed replicas (the requester plus the $2f$ requested replicas) in at least one correct and decided replica, this replica will forward its decision and a valid $\langle \texttt{FWD-DECISION}, c, v, \Gamma \rangle$ will be received by $r'$. Then, $r'$ decides $v$ and broadcasts the $\langle \texttt{FWD-DECISION}, c, v, \Gamma \rangle$ to all replicas, ensuring all correct replicas eventually decide $v$.

     \textit{(Case 2.b.ii)}
     If less than $f+1$ correct replicas decided during the normal protocol execution, any correct, isolated replica may send a $\langle \texttt{REQ-DECISION}, c \rangle$ message to others, depending on whether it received $f+1$ matching \texttt{ACCEPTs} for $v$ (which depends now on Byzantine replicas sending such \texttt{ACCEPTs}). If \textit{any single} correct replica sends a $\langle \texttt{REQ-DECISION}, c \rangle$ message \textit{and} receives a valid $\langle \texttt{FWD-DECISION}, c, v, \Gamma \rangle$ response, it decides $v$, and broadcast this decision to all, thus enabling all correct replicas to eventually decide.
     If this is not the case, then we know at least $f+1$ replicas remain undecided, and will send  \texttt{STOP} messages, thus triggering a \textit{synchronisation phase} in which replicas synchronise their logs and continue under a new leader. After \textit{GST}, a stable leader will eventually reign after at most $f$ such regency changes. \end{proof}

\newpage

\end{document}